\newtheorem{theorem}{Theorem}
\newtheorem{lemma}{Lemma}
\newtheorem{proposition}{Proposition}
\newtheorem{definition}{Definition}
\renewcommand{\algorithmicensure}{\text{BSs Perform:}}  
\begin{document}

\title{Decentralized Energy Allocation for Wireless Networks with Renewable Energy Powered Base Stations}
\author{\authorblockN{Dapeng Li, Walid Saad, Ismail Guvenc, Abolfazl Mehbodniya, and Fumiyuki Adachi}
\vspace{-0.8cm}
\thanks{Dapeng Li is affiliated with the Key Lab of Broadband Wireless
 Communication and Sensor Network Technology of the Ministry of Education, Nanjing University of Posts and Telecommunications, China, email: dapengli@njupt. edu.cn.}
\thanks{Walid Saad is with Wireless@VT, Bradley Dept. of Electrical and Computer Engineering, Virginia Tech, USA,
and is also with  Dept. of Computer Engineering, Kyung Hee University, South Korea, email: walids@vt.edu.}
\thanks{Ismail Guvenc is with Dept. of Electrical and Computer Engineering, Florida International  University, USA, email: iguvenc@fiu.edu.}
\thanks{Abolfazl Mehbodniya and Fumiyuki Adachi are with Wireless Signal Processing and Networking Lab, Dept. of Comm. Engineering, Tohoku University, Japan, emails: mehbod@mobile.ecei.tohoku.ac.jp, adachi@ecei. tohoku.ac.jp. Dr. Saad was a corresponding author.}}
\maketitle

\begin{abstract}
In this paper, a green wireless communication system in which base stations are powered by renewable energy sources is considered. This system consists of a capacity-constrained  renewable power supplier (RPS) and a base station (BS) that faces a predictable random connection demand from  mobile user equipments (UEs). In this model, the BS powered via a combination of a renewable power source and the conventional electric grid, seeks to specify the renewable power inventory policy, i.e., the power storage level. On the other hand, the RPS must strategically choose the energy amount that is supplied to the BS. An M/M/1 make-to-stock queuing model is proposed to investigate the decentralized decisions when the two parties optimize their individual costs in a noncooperative manner.
The problem is formulated as a noncooperative game whose Nash equilibrium (NE) strategies are  characterized in order to identify the causes of inefficiency in the decentralized
operation. A set of simple linear contracts are introduced to coordinate the system so as to achieve an optimal system performance. The proposed approach is then extended to a setting with one monopolistic RPS and $N$ BSs that are privately informed of their optimal energy inventory levels.
In this scenario, we show that the widely-used proportional allocation mechanism is no
longer socially optimal. In order to make the BSs truthfully report their energy demand, an incentive compatible (IC) mechanism is proposed for our model.  Simulation results show that using the green energy can present significant traditional energy savings for the BS when the connection demand is not heavy. Moreover, the proposed scheme provides valuable energy cost savings by allowing the BSs to smartly use a combination of renewable and traditional energy, even when the BS has a heavy traffic of connections. Also, the results show that performance of the proposed IC mechanism will be close to the social optimal, when the green energy production capacity increases.
\end{abstract}

\begin{keywords}
Green communications, renewable power supply, game theory, contract, mechanism design.
\end{keywords}

\IEEEpeerreviewmaketitle

\section{Introduction}\vspace{-0.35em}
\IEEEPARstart{E}nergy efficiency has recently emerged as a major research challenge in the next generation of wireless systems \cite{Yu} in order to reduce the carbon footprint and $\text{CO}_2$ emission of wireless networks. The electric bill has become a significant
portion of the operational expenditure of cellular operators, and
the $\text{CO}_2$ emission produced by wireless cellular networks
are equivalent to those from more than 8 million cars \cite{Krishnamachari}.
The largest fraction of power consumption in wireless networks comes
from base stations (BSs), especially, when they are deployed in large numbers\cite{Marsan}.
With this premise, power saving in BSs is particularly important for network operators.
In order to reduce the energy consumption of the BS, an online algorithm is used to enable BSs to switch between on/off states according to traffic characteristics \cite{Koutitas1}.
At the same time, the role of renewable energy generation  will be a promising energy
alternative for future mobile networks. Understanding
the interaction between the random green power generations and the dynamics of the energy consumption on wireless networks becomes a main challenge facing future green communications design.

Manufacturers and network operators such as Ericsson, Huawei, Vodafone and China Mobile have started developing the BS with a renewable power source~\cite{Gozalvez,Belfqih,Huawei}. Relating to to the operation of modern radio and data center
networks, the concepts such as demand response, supply load control, and the model of
the prosumer in the smart grid are explored in \cite{Koutitas2}.
However, solar energy and wind energy are not controllable generation resources like traditional generation resources such as coal or natural gas. These resources are intermittent and can have a random output. It is, hence, both desirable and challenging to design and optimize the green energy enabled mobile networks. The recent work in \cite{Ansari} lays out basic design principles and research challenges on optimizing the green energy powered mobile networks, and points out that green energy powered BSs should be properly designed and optimized to cope with the dynamics of green power and mobile data traffic.

Considering the use of renewable energy sources, the authors in \cite{FZhang} study throughput and value based wireless transmission scheduling under time limits and energy constraints for wireless
network. The BSs' transmission strategies that can be optimized to reduce the energy demands without degrading the quality of service of the network are investigated in \cite{Ho}.  The authors in \cite{Han1} propose a packet scheduling algorithm shaping the BS's energy demands to match the green power generation. In order to sustain traffic demand of all users, green energy utilization is optimized by balancing the traffic loads among the BSs \cite{Han2}.
The green energy source aware user association schemes are examined in \cite{HKim}.
These research attempts mainly focus on adapting BSs' transmission strategies by using the green energy sources. However, most of these works do not account for energy storage nor do they investigate prospective energy storage strategies.

Since mobile traffic shows temporal dynamics, a BS's energy
demands change over time. Instead of focusing on specific techniques, the authors in \cite{Ansari} provide guidelines showing that BSs could determine how much energy is utilized at the current stage and how much energy is reserved for the future.
The authors in \cite{Farbod} propose to reduce the BS's power consumption at certain stages and reserve energy for the future to satisfy the network's outage constraint. The recent work in \cite{Niyato} provides a stochastic programming formulation to minimize the BS's energy storage cost (i.e., the battery self-discharge cost) and the cost of using the renewable energy and electric energy. However, there are many issues that these works do not tackle and remain to be addressed. For instance, the renewable source and the BS generally belong to different operators. As a result, to deploy renewable-powered BSs, it is important to understand the interaction/competition between the renewable energy supplier and the BS, especially considering the competition's impact on the QoS.

The main contribution of this paper is to propose a novel noncooperative game model to investigate the optimized energy allocation strategies of the renewable power supplier (RPS) and the BS.
On the one hand, the RPS incurs supply cost and the possible QoS performance reducing cost. On the other hand, the BS incurs energy reservation cost and also the QoS cost.
They will unilaterally choose their supply and reservation strategies, respectively, to minimize their individual cost.We formulate the problem as a noncooperative game between the RPS and the BS.

To solve the studied game, we propose an M/M/1 make-to-stock queue model to analyze
the competitive behavior between the RPS and the BS. Different from the traditional queue,
the make-to-stock queue has a buffer of resource laying in between the end user and the server.
In our queue model, the RPS can be viewed as a server. The order for energy storage replenishment from the BS will be placed to the RPS server. We explicitly characterize the NE strategies
and present a set of simple linear contracts to coordinate the system to optimize the overall network performance and achieve an efficient point, under a decentralized design.
We then extend the result to a setting with one monopolistic limited capacity RPS and $N$ BSs.
BSs are privately informed of their optimal energy inventory levels. If the energy orders of a given BS exceed the available capacity, the RPS allocates capacity using a publicly known allocation mechanism,
i.e., a mapping from BS orders to capacity assignments. Based on our model, an incentive compatible mechanism that induces BSs truthfully
report their energy demands is proposed.
In summary,  we make the following contributions,
\begin{itemize}
\item We analyze the decentralized energy allocation optimization for wireless networks with
green the energy powered BS.  Using an M/M/1 make-to-stock queue model, we investigate how the supply rate of RPS and the energy storage/inventory level affect the QoS.

\item We then model and analyze the interactions between the RPS and a BS by using a noncooperative game.  We prove the existence and uniqueness
of the equilibrium, and show how various system
parameters (i.e., the QoS reducing cost and the cost splitting factor) affect
the equilibrium behavior.

\item Based on the NE solution, we investigate how the BS controls the use of the combination of renewable energy and the traditional energy. We then explore the centralized optimal system performance and  propose a set of simple linear contracts to overcome the inefficiency of the NE.

\item We study the energy allocation mechanism design for a system
in which a limited capacity monopolistic RPS provides energy to multiple BSs.
We propose a truth-inducing mechanism where BSs truthfully report their optimal energy demand is a dominant equilibrium.
\end{itemize}

To the best of our knowledge, little work has been done for developping decentralized
algorithms that allow to capture the renewable energy allocation strategies,
particularly when multiple BSs operate. One interesting open problem is  to study how the predictable traffics affect the BSs' adaptive energy management strategies and how
the network performance is affected by forecast renewable production capacity, as done in this paper.

The rest of the paper is organized as follows. The system model and
the noncooperative game formulation are presented in Section II. In Section III, we analyze the game and
prove the existence and uniqueness of the NE, and we propose a linear contract to coordinate the system.
Section IV examines the renewable energy allocation mechanism design for a setting with
one limited capacity RPS and multiple BSs. We provide numerical results
and discussion in Section V. Section VI concludes the paper.
\vspace{-0.5em}
\section{System Model} \label{sec:mod}
We start by describing, in detail, the different entities of the studied system.  The description of the competition for renewable power inventories
and power supply capacity between a BS and the RPS is then presented.
\vspace{-0.5em}
\subsection{System Components}
\vspace{-0.5em}
\subsubsection{Electric grid and renewable power supplier}
In our model, we use the term electric grid to refer to a controllable generation resource such as coal,
natural gas, nuclear, or hydro. Existing research works, i.e., \cite{Ansari}, often make a similar assumption to investigate the energy allocation from the controllable electric grid or dynamic green energy source for BSs.
The power supplied from the electric grid has a price per unit of quantity, per unit of time.
We assume that in a certain \textit{production period},
the supplier output power is a random variable with mean $\mu_0$
(unit of energy per unit of time)\footnote{For example,  the authors in \cite{Hoff} investigated an approach to estimate the standard deviation of the change in output of solar energy over some time interval (such as one minute), using data taken from some time period (such as one year).}.
From the BS's perspective, the RPS's production facility is modeled as a
single-server queue with service times that are exponentially
distributed with rate $\mu$ which is referred to as the renewable energy supply rate.
The RPS is responsible for choosing
the parameter $\mu$ ($\mu<\mu_0$), which accounts for the fact that the RPS has a finite capacity.

\begin{figure}[t]
\centering
\includegraphics[width=2.3in]{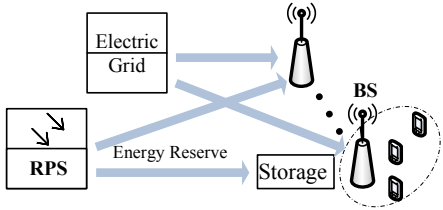}
\vspace{-0.2cm}
\caption{The considered system in which the BS can use the controllable energy from
electrical grid and it can reserve unstable renewable energy for serving UEs.}\label{sys}
\end{figure}
\vspace{-0.1em}
\subsubsection{Renewable Energy Powered Base Station}
We consider a single base station part of a wireless system that is used to provide mobile and wireless access. The energy consumptions of the BS include the energy coming from different components such as power amplifier, signal processing unit, antenna, and cooling\cite{Arnold}.
The static power consumption is constant when the
base station is active and does not have any radio transmissions.
We assume that the static component, i.e., the power consumption of the rectifiers, is served by the stationary conventional energy source. For example, the work in \cite{Deruyck} shows that the rectifiers could consume $100$~W for a microcell BS.
In contrast, the dynamic power consumption includes the digital signal processing, the
transceiver and the power amplifier and, thus, it can fluctuate during time due to
variations of the load on the base station.
The amount of power consumption depends on the type of base stations \cite{Deruyck}. In this paper, we take a typical microcell base station as an example and use the data from \cite{Niyato}, i.e., the dynamic power consumption coefficient is $24$~W (Joule/s) per connection.

The connection demand of the wireless base station depends on the usage condition,
and can be predicted from the usage history.
Indeed, modeling the arrival of a new wireless call or
message arrivals in packet-data network as a Poisson arrival process is extensively
studied \cite{Huang,Epstein,Celik}.
Moreover, research works such as in \cite{Niu1} showed that traffic demands are highly predictable. Although even for the queuing systems whose arrival or service rates have general distributions, the heavy traffic condition will generate mean queue lengths that coincide with M/M/1 queue results  \cite{Halfin}.

We model the connection demand as a homogeneous Poisson process with rate $\lambda$.
The incoming connections are assumed to form a single waiting line that depends on the order of arrival, i.e., the first-come first-served discipline. We assume that the BS consumes a unit transmission energy, $E$, for each connection in the downlink. Each connection generates a revenue per unit time to the BS.

In order to serve the connections by using the intermittent renewable energy, the BS should charge the green energy to an energy storage unit.  If the energy storage is less than a certain level,
the BS will replenish the green energy from the RPS.
The BS will pay  a price $P_1$ to the RPS for reserving each unit energy.
The BS's power reservation strategy needs to be varied dynamically and periodically as renewable
production condition changes. Hereinafter, we consider one such \textit{production period}.
Notice for the ideal case that the BS can be supplied with enough green power, there is no need to analyze green energy allocation strategies and the QoS cost again.

The BS incurs a reservation cost for holding energy in the storage unit.
Such a reservation cost includes both physical and financial components.
For example, the stored energy can decrease even without consumption. Such a
physical component is referred to as the self-discharge
phenomenon \cite{Niyato}. On the other hand, the financial holding
cost could be proportional to the energy's market price, i.e., the BS can sell the reserved energy to
other consumers  instead of holding it in the storage \cite{Tham}.
The future smart grid is able to integrate and exchange different energy flows among various users through the on-gird physical and cyber infrastructures \cite{Fang}.  Particularly, the cyber infrastructure which can perform energy trading consists of a large number of communication and computing networks, wide-area monitors, various sensors, and control functions together with necessary information processing functions.  Thus, the BS or any other users that are on the grid will have the capability to sell the reserved energy to other consumers instead of holding it in the storage. Moreover, some electricity companies have already offered initial energy buy-back programs \cite{buy-back}.

Both the self-discharge cost and the financial holding
cost can be proportional to the average  energy reservation
level. As a result,  we use the term of $c\cdot I_s$ per unit time to evaluate the energy reservation cost, where $I_s$ is the average  energy reservation level and $c$ is the cost coefficient.
Note that, increasing the energy reservation results in less backlogged connections, but  yields
a higher reservation cost.

\subsection{Renewable Energy Supply-Inventory Game}
When a connection enters the queue of the BS, it will be provided access to the spectrum only if its energy requirement amount is available in the storage.
The energy consumption of each connection is one unit, since we assume that each connection lasts one time unit. If there is not enough energy in storage, the UE has to wait until the BS reserves sufficient energy. We say that such a connection experiences a \textit{backlogged} access.
The BS should make a decision on the desired energy reservation level $s$.
If the energy storage is less than $s$ units,
the BS will place an order of one unit energy from the RPS\footnote{ As shown in in \cite{Krieger},the length of charging pulses ranging from milliseconds to a second, are scaled to correlate with the electrochemical response times in the batteries.  Also, the charging efficiency which determines the percent of energy lost during charge achieves to about $90\%$. }.
Such a decision means that the BS should initially charge the energy storage to $s$ units. Otherwise
when game begins, the BS will place energy replenishment orders no matter there are connections arrivals or not. Hence, the energy reservation level is also referred to as energy \textit{inventory} strategy.

The service process of a queuing system is characterized by the distribution of the time serving the arrival of the traffic pertaining to a given customer. In this regard, the scheduling process allows one to control the service that is allocated to the traffic classes \cite{queue}.
With regards to the RPS, its output is a random variable and has the capability to generate an average of $\mu_0$ units energy per unit time in a certain time duration.  At the same time, each arriving radio connection will consume one unit energy. The supply rate $\mu$ of the RPS implies that an average of $\mu~(\mu\leq\mu_0)$ units energy can be scheduled to serve the connections in unit time. Thus, the energy supply of RPS can be modeled as the service process of a queue. Moreover, the stock can be introduced to improve the QoS, i.e., to reduce the average queue length.
If multiple connection requests arrive at the same time, multiple energy orders will be placed to the RPS at the same time. A standard queue system will serve the orders according to a First-in-First-out principle.

The value of dynamic energy consumption is related to the accessed connections. It can be determined by the arriving accessing requests and the information from the resource scheduling process. Generally, the radio access request information is transmitted over the public control channel by mobile users \cite{Mishra}.
Then, at the beginning of each unit time,  the MAC layer of the BS can obtain the instantaneous energy demand value, and the energy supply requests can be placed to the RPS accordingly.
The key metric for a queuing system is the stationary average queue length which is directly related to the total QoS cost.  Accordingly, we mainly investigate this statistical metric instead of using the
 the instantaneous state of the queuing system in our model.

If a connection that is at the head of the queue finishes the access,
the on-hand reservation decreases by one.
Thus, an access request is equivalent to an energy reservation request, and is placed to the RPS at the time instant when a connection arrives as well, i.e., at each epoch of the demand process.
The backlogged access might generate detrimental consequences
on the system. The QoS will be deteriorated, the UE will wait and thus will have a higher delay.
To model this QoS degradation, each backlogged access
will be assigned a cost $b$ for the system.  This cost is split between the RPS and BS,
with a fraction $\alpha\in[0,1]$ charged to the BS. The parameter $\alpha$ is exogenously specified
in our model. Its value will depend on a variety of factors, such as the structures of
the market and the UEs' expectations.

We assume that the system state, the connection demand rate, and the cost parameters
are known by each agent.  The BS and the RPS select the renewable energy base reservation $s$
and the energy supply rate $\mu$, respectively, in order to maximize their own profits.
The competition between the BS and the RPS is shown in Fig. \ref{gamemodel}.
Let $D$ define the average number of backlogged accesses, and $I_s$
be the average energy storage level per unit time.
$X(\mu)$ represents the operation cost for the RPS with power supply rate $\mu$. Recall that
$c$ is the reservation cost factor. Then, the average cost per unit time for the BS is given by:
\begin{equation}\label{Cc1}
C_o(s,\mu) = c\cdot I_s + \alpha b \cdot D,
\end{equation}
and the average cost per unit time for RPS will be:
\begin{equation}\label{Co1}
C_r(s,\mu) = (1-\alpha)b\cdot D + X(\mu).
\end{equation}

\begin{figure}[t]
\centering
\includegraphics[width=3.0in]{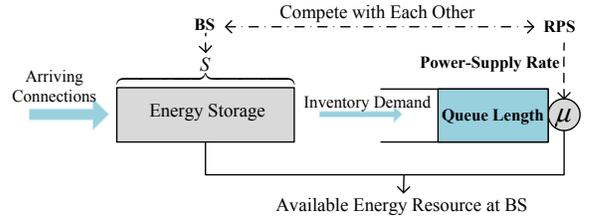}
\vspace{-0.2cm}
\caption{The renewable energy supply-inventory game model in which the energy supply-inventory strategies can be viewed as a kind of energy allocation strategies. }\label{gamemodel}
\end{figure}

Our model studies the scenario in which the RPS can sell its energy to different
entities (for example the BS and the grid). This is captured in our model by considering the changes in the load factor of the RPS. This load factor change will
generate the operation cost for the RPS and is represented by
$X(\mu)$ in the utility function.

In the next section, we will use an M/M/1 make-to-stock queue to show how $I_s$ and $D$ are determined by $s$ and $\mu$. Here, we just describe the basic model and principles.
Because the unsatisfied connection request is backlogged, and we use an average cost criterion, it follows that the agents' revenues are independent of their costs.
For example,  if the RPS supplies energy to the BS at a fixed
wholesale price $w$ (per unit energy in a time unit) and the average number of
accesses is $\lambda$ in one unit time, then the average revenue of the RPS is $w\cdot\lambda$ per unit of time which is irrelevant to its strategy $\mu$~\footnote{If the UE departs and never returns, that potential revenue is lost. This will lead to a totally different analysis, and be left
for future work.}.

Thereafter, profit maximization and cost minimization lead to the same
solution. For convenience, we adopt a cost-minimization framework.
The RPS and the BS choose their strategies (i.e., they must choose their strategy
before observing the other agent's strategy)
with the objective to optimize their individual energy reservation-related costs.
Given the model, the problem is formulated as a \emph{noncooperative strategic game}\cite{Saad} in which : 1) the \emph{players} are the BS and the RPS, 2) the \emph{strategy} of the BS and the RPS are $s$ and $\mu$, respectively, and 3) the \emph{cost functions} are given by the following equations (\ref{Cos2}) and (\ref{Cr2}).

\section{Game and Equilibrium Analysis}
To analyze this game, we first obtain the average costs for the two players for any given
pair of strategies $(s,\mu)$. We then prove that the game has a unique Nash equilibrium solution.
Finally, we compare the centralized optimal solution
with the Nash solution.
\vspace{-0.35em}
\subsection{Cost Analysis}
In an attempt to isolate and hence understand
the impact of energy inventory for the BS, we
consider a queuing model with an attached reservation: the mobile connection (traffic transmission) demands arrive at a \textit{rate} of $\lambda$ connection demands/unit time and are served by the green energy in the storage; At the time instant when a connection arrives, the BS place an energy replenishment order to the RPS which satisfies the order with a rate $\mu$.
In this regard, the radio access network request can be viewed as being served at a \textit{rate} of $\mu$ units energy/unit time. Consequently, the energy replenishment operation at the BS behaves as a single-server  queue whose service rate is $\mu$ and demand rate is $\lambda$. Such a queuing policy which is illustrated in Fig.~\ref{queue}, constitutes the central theme of the subsequent cost analysis.

Assuming one wireless connection consumes one unit energy in one unit time
is a common assumption in existing research works \cite{Ansari,Deruyck,Niyato}. Here, we clarify how to decide the unit energy and unit time. In wireless networks, the transmission time interval (TTI, several milliseconds) is a parameter that refers to the encapsulation of data from higher layers into frames for transmission on the radio link layer, thus related to the duration of a transmission on the radio link.
For example, the radio frame of an LTE system lasts 10 milliseconds \cite{lte}.
Each connection could account for a number of $t_d$ TTIs. The dynamic power consumption coefficient is $E$ W (Joule/s) per connection. Thus, each connection demand consumes $t_d\cdot\text{TTI}\cdot E$ Joule in unit time. In other words, the time unit is set to be $t_d\cdot\text{TTI}$ and each user may generate several connection demands in a certain time duration.
The value of $t_d\cdot\text{TTI}$ can range from a few milliseconds to a few seconds as it is related to
QoS cost coefficient, since the number of backlogged connections are calculated over these time intervals. Without loss of generality, in simulations, the time unit $t_d\cdot\text{TTI}$ is set to be $1$ second. For example, the energy unit can be $24$ Joule.

\begin{figure}
\centering
\includegraphics[width=2.5in]{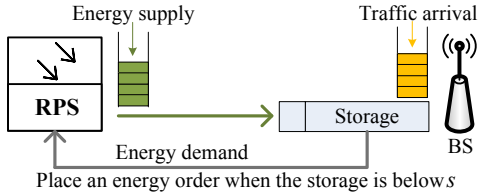}
\vspace{-0.2cm}
\caption{The proposed make-to-stock queue which has an energy storage laying in between the end mobile user and the RPS.}\label{queue}
\end{figure}

Such a queuing policy which is illustrated in Fig.\ref{queue},
constitutes the central theme of the subsequent cost analysis.
Let $p_j$ be the probability that the BS has $j$ connections in its queue at steady state.
Based on the classical birth-death process in queuing theory \cite{queue},
we have, for the statistical equilibrium, $p_j = \rho^{j}p_0, j=1,2,3,...$ where
$p_0 = 1-\rho$, and the condition for a stable queuing system is given by $\rho =
\frac{\lambda}{\mu} <1$ where $\rho$ represents the load factor.
Let $N_q$ be the stationary number of connections waiting at the queue.
Then, $N_q$ is geometrically distributed with mean $\frac{\rho}{1-\rho}$.

To simplify our analysis, we assume that $N_q$ is a continuous random variable when $N_q\geq s$,
and replace the geometric distribution with an exponential distribution with parameter
$\nu =\frac{1-\rho}{\rho}=\frac{\mu-\lambda}{\lambda}$. This continuous-state approximation can be justified by a heavy
traffic approximation, i.e., the traffic of arriving connections, and generates mean queue lengths that
coincide with M/M/1 results for all server utilization levels \cite{Halfin}. The heavy traffic approximation allows
the incorporation of general inter-arrival time and service
time distributions~\footnote{Note that the number of servers is asymptotically negligible after normalization. Moreover, in the non exponential case, $\nu$ would be divided by one-half of
the sum of the squared coefficients of variation of the
inter-arrival and service time distributions, $(c_{ia}^2+c_{is}^2)/2$.}.
In queueing theory,  an M/M/1 queue represents the queuing system having a single server, where arrivals are determined by a Poisson process and job service times have an \textit{exponential distribution}.
The average queue length depends on the load factor, which is the ratio of the service demand rate and the service supply rate. The heavy traffic condition means that the queuing system has a large load factor, which corresponds to the scenario that there is not enough green energy generation for wireless networks. For example, future BSs may request green energy from the small capacity energy supplier, i.e., the on-roof home green energy equipments. Moreover, our formulation is also suitable to model small base stations operating in rural areas, which may not have continuous access to the green energy supplier with a high capacity.

Hence, under such a condition, no matter what kind of distribution its output follows, we can use an exponential distribution to analyze the RPS's random output.
Although this continuous-state approximation may lead to slightly different
quantitative results (the approximation tends to underestimate the optimal discrete energy reservation levels), it has no effect on obtaining an insight into the system, which is the objective of our study.
Notice, the make-to-stock queuing model is essentially a queuing system which can be used to examine scenarios in which customers arrive for a given service, wait if the service cannot start immediately and leave after being served. The stock model is just introduced to reduce the average waiting length. In other words, the storable property of the energy can enable us to design the storage strategies for using the green energy in the queueing service system.

Next, we consider the average power supply cost of the RPS, $X(\mu)$ in equation (\ref{Co1}).
Recall that the maximum power production rate of the RPS is $\mu_0$.
The RPS provides the power for the BS as a production process that has exponentially
distributed inter-production times with rate $\mu$. Then, the remaining supply capacity of RPS
can be modeled as a Possion process with rate $\mu_0-\mu$. Thus, if there are power supply requests from other entities (i.e., other base stations or home appliances) with rate $\lambda_0$, their load factor will be increased by:
\begin{equation*}
\triangle_{\mu} = \frac{\lambda_0}{\mu_0-\mu}-\frac{\lambda_0}{\mu_0}=\frac{\lambda_0\mu}{\mu_0(\mu_0-\mu)}.
\end{equation*}

Then, $X(\mu) = c_s \triangle_{\mu}$ represents the operation cost for the RPS.
The average queue length depends on the load factor, which is the ratio of the service demand rate and the service supply rate.   Thus, the load factor is crucial to a queuing system. If the load factor approaches $1$, then the average queue length will go to infinity. As a result, the QoS of the queue system will be totally destroyed.
For a stationary energy source with infinite capacity, it is reasonable to assume that it will not
bear any QoS cost after deployment. However, a random energy source has a direct impact on
the QoS, which may then impair the user's experience of the network operator. This will be detrimental to the operator who may now decide to choose another energy supplier. As a result, the original supplier will lose revenues. In this regard, such an impact will be a major concern in green energy markets.
Hence, we consider that the RPS will also incur certain QoS cost in the system.

Without loss of generality, we normalize
the expected variable cost per unit time by dividing
it by the reservation cost rate $c$. Toward this end, we
normalize the cost parameters as follows:
\begin{equation*}
\tilde{b} = \frac{b}{c},~~\tilde{c}_s = \frac{c_s}{c}\frac{\lambda_0}{\mu_0},~~\tilde{c} = \frac{c}{c} = 1.
\end{equation*}
To ease the notation, hereinafter, we omit the tildes from
these parameters. There is a one-to-one correspondence between $\nu$ (we call it as normalized energy supply rate) and the
the RPS's decision variable $\mu$. Hence, the steady-state expected normalized variable cost
per unit time for the RPS and BS can be investigated in terms of the two decision variables $s$ and $\nu$.
\begin{figure}
\centering
\includegraphics[width=3.0in]{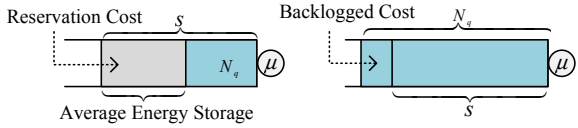}
\vspace{-0.2cm}
\caption{Illustration of renewable energy reservation cost and the backlogged cost.}\label{queuecost}
\end{figure}

\begin{proposition}
The BS's average cost can be expressed by:
\begin{equation}\label{Cos2}
C_o(s,\nu) =  s - \frac{1-e^{-\nu s}}{\nu} + \alpha b\frac{e^{-\nu s}}{\nu},
\end{equation}
while the RPS's average cost is:
\begin{equation}\label{Cr2}
\begin{split}
C_r(s,\nu) &= (1-\alpha)b\frac{e^{-\nu s}}{\nu}  + c_s\frac{\nu+1}{\varphi-\nu},
\end{split}
\end{equation}
\end{proposition}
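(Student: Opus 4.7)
The plan is to read off the distribution of the relevant queue-length random variable, compute the two expectations that enter the cost functions in (\ref{Cc1})--(\ref{Co1}), and then rewrite the RPS's operating cost in terms of the normalized decision variable $\nu$. In a base-stock make-to-stock queue, at steady state the physical inventory equals $(s-N_q)^+$ and the backlog equals $(N_q-s)^+$, where $N_q$ is the stationary number of outstanding replenishment orders at the RPS's single-server queue. By the birth--death analysis already invoked in the text, $N_q$ is geometric with mean $\rho/(1-\rho)$, and the heavy-traffic continuous-state approximation replaces it by an exponential random variable with rate $\nu=(\mu-\lambda)/\lambda$.

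First I would compute the average inventory. Under $N_q\sim\mathrm{Exp}(\nu)$, a direct integration by parts gives
\begin{equation*}
I_s = E[(s-N_q)^+] = \int_0^s (s-x)\,\nu e^{-\nu x}\,dx = s - \frac{1-e^{-\nu s}}{\nu}.
\end{equation*}
Next I would compute the average backlog using the memoryless property: conditioned on $\{N_q>s\}$, the overshoot $N_q-s$ is again $\mathrm{Exp}(\nu)$, and $P(N_q>s)=e^{-\nu s}$, so
\begin{equation*}
D = E[(N_q-s)^+] = P(N_q>s)\,E[N_q-s\mid N_q>s] = \frac{e^{-\nu s}}{\nu}.
\end{equation*}
Substituting $I_s$ and $D$ into (\ref{Cc1}) and using the normalization $c=1$ immediately yields (\ref{Cos2}).

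To assemble (\ref{Cr2}), I would translate the RPS's operating cost $X(\mu)$ into the $\nu$-parametrization. The text has already derived $X(\mu)=c_s\,\triangle_\mu = c_s\lambda_0\mu/[\mu_0(\mu_0-\mu)]$. Substituting $\mu=\lambda(\nu+1)$ and defining $\varphi = \mu_0/\lambda - 1$ produces the clean form $\mu/(\mu_0-\mu) = (\nu+1)/(\varphi-\nu)$. Applying the stated normalization $\tilde c_s = (c_s/c)(\lambda_0/\mu_0)$ and dropping tildes reduces the operating cost to $c_s(\nu+1)/(\varphi-\nu)$; combining this with the backlog term $(1-\alpha)b\,D = (1-\alpha)b\,e^{-\nu s}/\nu$ from (\ref{Co1}) produces (\ref{Cr2}).

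The only step that warrants more than a routine check is the continuous-state replacement of $N_q$ by an exponential variable, which is justified by appeal to the heavy-traffic approximation discussed earlier. Beyond that, the derivation reduces to two elementary expectations of an exponential random variable and a single change of variable from $\mu$ to $\nu$; the auxiliary constant $\varphi = \mu_0/\lambda - 1$, which is implicit in the proposition's statement, should be identified explicitly when the proof is written out.
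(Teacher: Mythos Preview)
Your proposal is correct and follows essentially the same approach as the paper: compute $I_s$ and $D$ as expectations of $(s-N_q)^+$ and $(N_q-s)^+$ under the exponential approximation for $N_q$, then substitute into (\ref{Cc1}) and (\ref{Co1}). The only cosmetic difference is that the paper evaluates $D$ by directly integrating $\int_s^\infty (x-s)\nu e^{-\nu x}\,dx$ whereas you invoke the memoryless property, and you spell out the $\mu\mapsto\nu$ change of variables for $X(\mu)$ in more detail than the paper does; both are entirely equivalent.
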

where $\varphi = \frac{\mu_o}{\lambda} - 1$.
\begin{proof}
With continuous-state approximation, the expected energy reservation level of the BS can be expressed as,
\begin{equation*}
I_s = E[(s-N)^+] = \int_0^s(s-x)\nu e^{-\nu x}{\rm d}x = s - \frac{1-e^{-\nu s}}{\nu}.
\end{equation*}

The expected backlogged users is,
\begin{equation}\label{D}
D = E[(N - s)^+] = \int_s^\infty(x-s)\nu e^{-\nu x}{\rm d}x = \frac{e^{-\nu s}}{\nu}.
\end{equation}
Substituting the above equations into (\ref{Cc1}) yields  (\ref{Cos2}).
Substituting $X(\mu)$ and  (\ref{D}) into
(\ref{Co1}) yields (\ref{Cr2}).
\end{proof}

The backlogged cost and the reservation cost associated with the queue model are illustrated in Fig.\ref{queuecost}.
The system operates on a periodic basis, i.e., several hours during which
the energy generation capacity and connection demand rate can be predicted.
At the beginning of each period, the renewable power production capacity and the
access demand can be predicted from the history record and atmospheric parameters
which are available from monitoring devices or sensors in the control and monitoring system of a
smart grid environment~\cite{Huang}.
Notice, our model assumes fixed energy prices and is a \textit{one-shot} game. Thus, no matter whether the RPS has a storage or not, it must satisfy the energy requests from the BS rather than store the energy. If the RPS decides to sell its electricity to different time while accounting for future QoS impacts and different energy prices,
a totally new \textit{repeated game} model over time horizon should be designed.
This interesting extension is currently  not investigated here, and will be subject of future research.

The choice of a strategy $s$ by the BS corresponds to placing an energy replenish order to the RPS.
Thus, at the beginning of the game, the BS should have charged its energy storage to
an inventory of $s$ energy units.
The source of the charge can be the electric gird or the RPS.
When the game begins, the RPS supplies the energy with the determined
rate $\nu$ and the BS serves the UEs with the renewable energy,
thereby striking a balance between reducing the energy bill and maintaining the QoS.
The operation of the system is shown in Fig.~\ref{operation}.
\begin{figure}[t]
\centering
\includegraphics[width=3.0in]{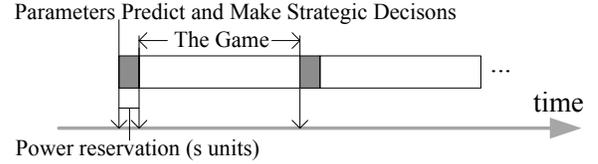}
\vspace{-0.2cm}
\caption{Illustration of the operation of the system.}\label{operation}
\end{figure}

\subsection{System Performance and Game Solution}
In this subsection, we will characterize
the Nash equilibrium (NE) strategies \cite{Osborne} and identify the causes of inefficiency in the decentralized operation.  Toward this end, we first prove the existence and
uniqueness of the NE and then compare the Nash solution with the centralized optimal solution.
Subsequently, we present a set of simple linear contracts to coordinate the system.
Finally, we discuss the adaptive power management of the BS based on the NE solution.

\subsubsection{The Nash Equilibrium of the Game}\
In a decentralized game-theoretic formulation, the BS and the RPS select their individual strategies $s$ and $\nu$ in order to minimize their own cost functions.
In other words, the BS will choose $s$ to minimize $C_o(s,\nu)$, assuming that the
RPS chooses $\nu$ to minimize $C_r(s,\nu)$; likewise, the RPS will simultaneously
choose $\nu$ to minimize $C_s(s,\nu)$, assuming the BS chooses $s$ to minimize $C_o(s,\nu)$.
A pair of strategies $(s^*,\nu^*)$ is an NE if neither the BS nor the
RPS can gain from a unilateral deviation from these strategies, i.e.,
\begin{equation}\label{best-r}
\begin{split}
&s^*(\nu)= \arg\min\limits_x C_o(x,\nu^*),\\
&\nu^*(s)= \arg\min\limits_x C_r(x,s^*).
\end{split}
\end{equation}

There are two important issues regarding the determination of the NE: existence
and uniqueness. We will show that a NE always exists in the game and the uniqueness
of the equilibrium is always guaranteed.  First,
let us define an auxiliary function in terms of the BS's backlogged
share $\alpha$,  backlogged cost $b$, and RPS's supply cost $c_s$:
\begin{equation*}
\begin{split}
f = \sqrt{\frac{b-\alpha  b+(b-\alpha b )\ln(1+\alpha b)}{c_s(1+\alpha b)}}.
\end{split}
\end{equation*}

The backlogged cost split factor $\alpha\leq 1$, thus $f$ is always a real value.
The function $f$ is used to make the Nash solution concise and will play a prominent role in our analysis.
\begin{theorem}\label{nash-t}
The proposed noncooperative game admits a unique NE,
\begin{equation}\label{nash-s}
{\nu}^* = \frac{f\varphi}{\sqrt{1+\varphi}+f},~~{s}^*= \frac{(\sqrt{1+\varphi}+f)\ln(1+\alpha b)}{ f\varphi}.
\end{equation}
\end{theorem}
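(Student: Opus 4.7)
The plan is to compute the two best-response correspondences from the respective first-order conditions, substitute one into the other, and solve the resulting algebraic equation in closed form; uniqueness will then follow from strict convexity on the BS side and from a monotonicity argument on the RPS side.

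First I would differentiate the BS cost in (\ref{Cos2}) with respect to $s$ with $\nu$ held fixed, obtaining $\partial C_o/\partial s = 1 - (1+\alpha b)e^{-\nu s}$ and $\partial^2 C_o/\partial s^2 = \nu(1+\alpha b)e^{-\nu s} > 0$. Strict convexity immediately pins down the unique best response $s^*(\nu) = \ln(1+\alpha b)/\nu$, obtained by setting $e^{-\nu s} = 1/(1+\alpha b)$. This step is routine and I do not expect any trouble from it.

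Next I would differentiate (\ref{Cr2}) with respect to $\nu$ on the feasible interval $\nu \in (0,\varphi)$, which gives $\partial C_r/\partial\nu = -(1-\alpha) b \, e^{-\nu s}(s\nu+1)/\nu^2 + c_s(\varphi+1)/(\varphi-\nu)^2$. Substituting the BS best response collapses the exponential to $1/(1+\alpha b)$ and $s\nu+1$ to $1+\ln(1+\alpha b)$, so the RPS first-order condition reduces to $\nu^2/(\varphi-\nu)^2 = f^2/(1+\varphi)$ with $f$ as defined just before the theorem statement. Taking the unique positive root and rearranging yields $\nu(\sqrt{1+\varphi}+f) = f\varphi$, hence $\nu^* = f\varphi/(\sqrt{1+\varphi}+f)$, and substituting back into $s^*(\nu)$ recovers the stated $s^*$.

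The step I expect to require the most care is the uniqueness of the RPS best response, because $C_r$ is not obviously globally convex in $\nu$ on $(0,\varphi)$. I plan to sidestep a full convexity argument by showing instead that $\partial C_r/\partial\nu$ is strictly increasing in $\nu$. A direct quotient-rule calculation gives $\frac{d}{d\nu}\!\left[-e^{-\nu s}(s\nu+1)/\nu^2\right] = e^{-\nu s}\bigl[(s\nu+1)^2+1\bigr]/\nu^3 > 0$, and $1/(\varphi-\nu)^2$ is manifestly increasing on $(0,\varphi)$. Combined with the boundary behavior $\partial C_r/\partial\nu\to -\infty$ as $\nu\to 0^+$ and $\partial C_r/\partial\nu\to+\infty$ as $\nu\to\varphi^-$, strict monotonicity guarantees a single root, hence a unique RPS best response $\nu^*(s)$. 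Intersecting this with the BS best response $s^*(\nu)$ produces the closed-form NE above as the unique simultaneous solution, completing the proof.
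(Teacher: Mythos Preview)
Your proposal is correct and follows essentially the same route as the paper: derive the BS best response $s\nu=\ln(1+\alpha b)$ from the first-order condition on (\ref{Cos2}), substitute it into the RPS first-order condition on (\ref{Cr2}) to obtain $\nu^2/(\varphi-\nu)^2=f^2/(1+\varphi)$, and solve. Your treatment is actually more thorough than the paper's, which simply asserts uniqueness from the first-order conditions without the strict-convexity and monotonicity checks you supply.
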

\begin{proof}
$s^*(\nu)$ is the BS's reaction curve representing the optimal energy reservation level given a supply rate $\nu$.
As (\ref{Cos2}) is concave in $s$, $s^*(\nu)$ is characterized by the first-order condition
\begin{equation}\label{nashs-1}
\nu s^*(\nu) = \ln(1+\alpha b).
\end{equation}

Using a similar argument and (\ref{nashs-1}), we find that the RPS's
reaction curve $\nu^*(s)$ satisfies,
\begin{equation}\label{nashv-2}
\frac{\nu^2}{\nu-\varphi} = \frac{f}{c_s(1+\varphi)}.
\end{equation}

The unique solution to equations (\ref{nashs-1}) and (\ref{nashv-2}) yields ~(\ref{nash-s}).
\end{proof}

From Theorem \ref{nash-t}, the resulting equilibrium costs $C_o^*$  and $C_r^*$ are,
\begin{equation*}\label{nash-s-c}
\begin{split}
&C_o^* = C_o({s}^*, {\nu}^*) = {s}^*,\\
&C_r^* = C_r({s}^*, {\nu}^*) = \frac{b-\alpha b}{(1+\alpha b){\nu}^*} + \frac{s^*f}{\ln(1+\alpha b)}.
\end{split}
\end{equation*}

Since the existence of the N.E. is guaranteed, so the traditional
best response dynamics \cite{Osborne} can be used to converge to the NE. For the proposed game model,
we have got the analysis solution of the NE. Consequently, the RPS and the BS are able to
make a decision about the strategies at the beginning of the game.

Because the function $f$  is decreasing in $\alpha$ and is
increasing in $b$ for $b >0$, it follows that as $\alpha$ increases,
the BS becomes more concerned with backlogged UEs
and increases his energy reservation level ($f\downarrow \rightarrow s^*\uparrow$), while the RPS
cares less about backlogged users and builds less energy supply
rate $\nu^*$ ($f\downarrow \rightarrow \nu^*\downarrow$).

\subsubsection{Distributed algorithm to achieve the NE}
Generally, for the scenarios that the analysis solution of the NE cannot be obtained,
finding distributed algorithms to reach a NE is both a challenging and important task.
One popular algorithm is the so-called best
response dynamic which allows the players to  take
turns choosing their best response (optimal strategy at a current time) to the state of the game in
the previous period.  If this process of iterative best responses converges, it is guaranteed to reach an NE of the game \cite{Saad}. Nevertheless, the convergence of best response dynamics is only guaranteed for certain classes of games, such as supermodular games \cite{Yildiz}. In order to use the best response algorithm,  we show that the proposed game is supermodular by reversing the order on one of the strategies.
Recall that we use a cost minimization framework. Thus, the supermodularity for our model is equivalent to the requirement that the utility function is twice continuously differentiable and
$\frac{\partial ^2 C_o(s,\nu)}{\partial s\, \partial \nu} \leq 0$ and $\frac{\partial ^2 C_r(s,\nu)}{\partial s\, \partial \nu} \leq 0$.

The second order derivatives are given as follows.
\begin{equation*}
\begin{split}
&\frac{\partial ^2 C_o(s,\nu)}{\partial s\, \partial \nu} = se^{-\nu s} + \alpha b e^{-\nu s} \geq 0.\\
&\frac{\partial ^2 C_r(s,\nu)}{\partial s\, \partial \nu} = (1-\alpha) b e^{-\nu s} \geq 0.
\end{split}
\end{equation*}
As a result, this leads to a \textit{submodular} game \cite{Yildiz}.

However, the proposed game can be converted into a \textit{supermodular} game when $\nu$ is ordered in the reverse order:
\begin{equation*}
\begin{split}
\frac{\partial ^2 C_o(s,\nu)}{\partial s\, \partial (-\nu)}  \leq0,~~
\frac{\partial ^2 C_r(s,\nu)}{\partial s\, \partial (-\nu)}  \leq 0.
\end{split}
\end{equation*}

Notice that we do not change neither the payoffs nor the structure of the game, we only alter
the ordering of the one player's strategy space. This approach only works in two-player games, and the submodular games with more than two players may exhibit dramatically different properties than the supermodular ones \cite{Yildiz}. Hence, for the proposed game, the decreasing best responses which means that each player's best strategy response function is decreasing in other player's strategy, will converge to the NE point. Moreover, for submodular games with finite strategies, the work in \cite{Topki} has proposed algorithms using the fictitious play to make the game dynamics converge to a pure equilibrium point.

\textbf{\textit{Remark 1}}: The value of $\alpha$ depends on a variety
of factors, such as the structure of the market, and the users' expectations.
At one extreme, if the RPS has a monopolistic  position
in the market with many competing BSs, $\alpha$
will be near $1$  and the BS has to sustain
the backlogged costs. At the other extreme, the RPS could be part of
a competitive RPS market, and, thus, a poor user service at the BS will mostly harm the RPS
as the cellular network can switch to another RPS. Then, $\alpha$ will be near $0$.
Clearly, our model allows to capture all such scenarios. In practice, the RPS' will eventually pay a certain cost for a delayed transmission due to the fact that delayed users might lead the network operator to either switch to another RPS or re-negotiate its contract with the RPS.

\textbf{\textit{Remark 2}}:
If the traffic load is served with a combination of a renewable
power source and the electric grid, we should design the adaptive power management scheme for the BS
to control the purchase of energy generated from the renewable source (with price $P_1$) and energy from the electric grid  (with price $P_2$).
The conventional energy has a stationary source. So that,
there is no need to account for the energy storage for using energy from the conventional source.
Let $\bar{\lambda}$ be the total arriving rate of connections.
Denote the arriving rate of connections served by the renewable energy as $\lambda$,
then the arriving rate of connections powered by the electric grid is $\bar{\lambda}-\lambda$.
Define $C_{ls}$ as the cost of the BS that schedules the load to be served by two kinds of energy sources. Based on the Nash equilibrium solution,  we get,
\begin{equation*}
\begin{split}
C_{ls} &= C_o^* + P_1\lambda + P_2(\bar{\lambda}-\lambda)\\
&=\frac{(\sqrt{1+\varphi}+f)\ln(1+\alpha b)}{ f\varphi} + P_1\lambda+P_2(\bar{\lambda}-\lambda),
\end{split}
\end{equation*}
where $\varphi = \frac{\mu_o}{\lambda} - 1$ and $C_o^*$ is the BS's cost at the NE. The optimal management discipline can be got by the first order derivation $\frac{\partial C_{ls}}{\partial \lambda} = 0$ or at the boundaries $\lambda=0$ and $\lambda=\bar{\lambda}$.

We assume that the energy price of the electric grid and the energy price of the RPS are fixed during one operation period. This is the case where the energy price is determined by the government or a contract between the user and the energy company \cite{Gprice}.
For example, a green pricing utility program in US department of energy shows that the green energy from a certain supplier has a fixed price \cite{Gprice}.
The fixed prices setting can help us to understand the interaction between the random green power generations and the dynamics of the energy consumption of mobile networks, especially considering
the QoS impact and decentralized energy allocation strategies.
If a dynamic pricing scheme is introduced between multiple users,
then the energy supply rate,  storage level, QoS cost and prices should be examined at the same time.
Currently, this challenging but interesting issue is still an open problem. Nonetheless, the developed model can serve as a building block for more elaborate, dynamic pricing mechanisms in future work.

\subsubsection{Centralized Optimal Solution}
Our performance benchmark for the decentralized system is the integrated/centralized
system, i.e., there is a single decision maker that simultaneously optimizes
the energy reservation level $s$ and the capacity variable $\nu$.
This may be the case when both the RPS and the BS belong to the same operator.
The total average expected normalized cost per unit time of the centralized system
can then be expressed as,
\begin{equation}\label{centralc}
\begin{split}
C(s,\nu) &= C_o(s,\nu)+ C_r(s,\nu)\\
&=s -\frac{1}{\nu} + (1+b) \frac{e^{-\nu s}}{\nu} +  c_s \frac{\nu+1}{\varphi-\nu}.
\end{split}
\end{equation}

We introduce an auxiliary variable, $\gamma = \ln(1+b)$.
The centralized solution is given in the following proposition.
\begin{proposition}
The optimal centralized solution is the unique solution to the first-order conditions, and is given by
\begin{equation}\label{optimal}
\bar{\nu} = \frac{\varphi\sqrt{c_s\gamma}}{\sqrt{c_s\gamma}+c_s\sqrt{\varphi+1}},~~
\bar{s} = \gamma\frac{\sqrt{c_s\gamma}+c_s\sqrt{\varphi+1}}{\varphi\sqrt{c_s\gamma}}.
\end{equation}
The resulting cost is
\begin{equation}\label{optimal-cost}
C(\bar{s},\bar{\nu} ) = \frac{1}{\varphi}\Big(c_s + \gamma  +2\sqrt{c_s\gamma(1+\varphi)}\Big).
\end{equation}
\end{proposition}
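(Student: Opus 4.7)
The plan is to exploit the separable structure of $C(s,\nu)$ in (\ref{centralc}) by minimizing first over $s$ with $\nu$ held fixed, which yields a clean closed form relation, and then minimizing the resulting one-dimensional reduced objective over $\nu$. Existence is automatic because $C\to\infty$ as $\nu\downarrow 0$ (from the $\gamma/\nu$ term that appears after substitution), as $\nu\uparrow\varphi$ (from the capacity term), and as $s\to\infty$ (linear growth); uniqueness will follow from strict convexity on each slice.

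First, I would compute $\partial C/\partial s = 1 - (1+b)e^{-\nu s}$ and observe $\partial^2 C/\partial s^2 = (1+b)\nu e^{-\nu s} > 0$, so for any fixed $\nu\in(0,\varphi)$ the cost is strictly convex in $s$ and the unique stationary point satisfies $e^{-\nu s} = 1/(1+b)$, i.e.\
\begin{equation*}
\nu\, s^*(\nu) \;=\; \ln(1+b)\;=\;\gamma.
\end{equation*}
Substituting $s=\gamma/\nu$ and $e^{-\nu s}=1/(1+b)$ into (\ref{centralc}) causes the two $1/\nu$ contributions to collapse, leaving the reduced one-variable objective
\begin{equation*}
\widetilde{C}(\nu) \;=\; \frac{\gamma}{\nu} \;+\; c_s\,\frac{\nu+1}{\varphi-\nu}, \qquad 0<\nu<\varphi.
\end{equation*}
Differentiating gives $\widetilde{C}'(\nu) = -\gamma/\nu^2 + c_s(\varphi+1)/(\varphi-\nu)^2$, and the second derivative is manifestly positive on $(0,\varphi)$, so $\widetilde{C}$ is strictly convex and possesses a unique stationary point. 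Setting $\widetilde{C}'(\nu)=0$ reduces to $\sqrt{\gamma}\,(\varphi-\nu) = \nu\sqrt{c_s(\varphi+1)}$, which solves to $\bar{\nu} = \varphi\sqrt{\gamma}/(\sqrt{\gamma}+\sqrt{c_s(\varphi+1)})$. Multiplying numerator and denominator by $\sqrt{c_s}$ gives exactly the expression for $\bar{\nu}$ in (\ref{optimal}), and $\bar{s}=\gamma/\bar{\nu}$ matches the stated formula.

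For the optimal cost, I would use the two tidy identities that drop out of the FOC: $\bar{s} = \gamma/\bar{\nu}$ and $\varphi-\bar{\nu} = \varphi\sqrt{c_s(\varphi+1)}/(\sqrt{\gamma}+\sqrt{c_s(\varphi+1)})$. Evaluating the first piece gives $\gamma/\bar{\nu} = (\gamma + \sqrt{c_s\gamma(\varphi+1)})/\varphi$, and expanding $c_s(\bar{\nu}+1)/(\varphi-\bar{\nu})$ splits cleanly into $\sqrt{c_s\gamma(\varphi+1)}/\varphi + c_s/\varphi$. Summing the two contributions yields (\ref{optimal-cost}). The only step that requires any care is the algebraic simplification of $c_s(\bar{\nu}+1)/(\varphi-\bar{\nu})$, since one must group terms so that the $\varphi+1$ factor cancels against $\sqrt{c_s(\varphi+1)}$; I expect that to be the sole bookkeeping obstacle, as everything else is a routine convexity-plus-FOC argument.
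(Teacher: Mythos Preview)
Your argument is correct, and it takes a cleaner route than the paper's proof. The paper sets both first-order conditions $\partial C/\partial s=0$ and $\partial C/\partial \nu=0$ simultaneously, then verifies that the Hessian of $C$ is positive definite at $(\bar s,\bar\nu)$ (which only certifies a \emph{local} minimum), and finally rules out the boundary pieces one by one, including a separate convex minimization of $C(0,\nu)$ followed by the inequality $b\ge\ln(1+b)$ to show $C(0,\nu)\ge C(\bar s,\bar\nu)$. Your sequential minimization bypasses all of that: strict convexity of $C(\cdot,\nu)$ on $[0,\infty)$ forces the inner optimizer $s^*(\nu)=\gamma/\nu>0$ to be the global minimizer in $s$ (so the $s=0$ boundary never needs separate treatment), and strict convexity of the reduced objective $\widetilde C(\nu)=\gamma/\nu+c_s(\nu+1)/(\varphi-\nu)$ on $(0,\varphi)$, together with $\widetilde C\to\infty$ at both endpoints, immediately yields a unique global minimizer. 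The trade-off is that the paper's Hessian computation is reusable if one later wants local sensitivity information, whereas your approach gives global optimality more directly but does not display the second-order structure of the joint problem.
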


\begin{proof}
The function $C(s,\nu)$ defined in equation
(\ref{centralc}) is continuously differentiable and bounded below by $0$ in
$\mathbb{B} = \{(s,\nu)|s\geq 0,\nu> 0 \}$. Thus, a global minimum is either a local
interior minimum that satisfies the first-order conditions or an element
of the boundary of $\mathbb{B}$. From the first-order conditions, we get
\begin{equation}\label{svbar}
\left\{
\begin{aligned}
&\frac{\partial C(s,\nu)}{\partial s} = 0, \\
&\frac{\partial C(s,\nu)}{\partial \nu} = 0,
\end{aligned}\right.
\Rightarrow\left\{
\begin{aligned}
 s\nu &= \gamma,\\
 \nu &= \bar{\nu},\\
 s &= \bar{s}.
\end{aligned}\right.
\end{equation}

Substituting the above equations into equation (\ref{centralc}) yields equation (\ref{optimal-cost}).
The only interior point that is a candidate for the global minimum is $(\bar{s},\bar{\nu})$.
In addition, the Hessian of $C(s,\nu)$ at $(\bar{s},\bar{\nu})$ is given by,
\begin{equation*}
H(\bar{s},\bar{\nu}) = \left[
\begin{array}{cc}
\bar{\nu} & \bar{s}\\
\bar{s} & \frac{\gamma(\gamma+2)}{\bar{\nu}^3}+\frac{2c_s(\varphi+1)}{(\varphi-\bar{\nu})^3}
 \end{array}\right]
\end{equation*}

From (\ref{optimal}), we get $\bar{\nu} \leq \varphi$. Thus, $\bar{\nu}$ is a feasible value.
Also, $\gamma = \ln(1+b) > 0$, for $b>0$,~$c>0$.
Thus, the Hessian matrix is positive definite and
$(\bar{s},\bar{\nu})$ is the unique local minimum in the interior of $\mathbb{B}$.
Now, we consider the boundary values: $\lim\limits_{\nu\rightarrow 0}C(s,\nu)\rightarrow \infty~\text{for}~s\geq 0$,
$\lim\limits_{\nu\rightarrow \varphi}C(s,\nu)\rightarrow \infty~\text{for}~s\geq 0$,
$\lim\limits_{s\rightarrow \infty}C(s,\nu)\rightarrow \infty$. When $s=0$, the boundary value is
\begin{equation*}
C(0,\nu) = \frac{b}{\nu} + c_s\frac{\nu+1}{\varphi-\nu}.
\end{equation*}

Obviously, $C(0,\nu)$ is a convex function of $\nu$. The first order derivation of $C(0,\nu)$ is
\begin{equation}\label{b-nu}
\frac{\partial C(0,\nu)}{\partial \nu} = 0 \Rightarrow \nu =\frac{\sqrt{b}}{\sqrt{c_s(1+\varphi)}+\sqrt{b}}\varphi \leq \varphi.
\end{equation}
The solution in above equation is also a feasible value.
Then, substituting this solution into $C(0,\nu)$, we get
\begin{equation*}
C(0,\nu) \geq  \frac{c_s}{\varphi}\Big(1 + \frac{b}{c_s} +2\sqrt{\frac{b}{c_s}(1+\varphi)}\Big).
\end{equation*}
Since $b \geq \ln(1+b)$, we derive $C(0,\nu) \geq C(\bar{s},\bar{\nu})$.
Hence, $(\bar{s},\bar{\nu})$  is the unique global minimum for $C(s,\nu)$.
\end{proof}

As expected, the optimal power supply rate for the BS decreases
with the supply cost $c_s$ and increases with the
backlogged cost $b$. Similarly, because
supply rate and power reservation provide alternative means
to avoid backlogged connection requests, the optimal base-reservation level
increases with the supply cost and with the normalized
backlogged cost. Finally, we can observe that the cost-split factor $\alpha$ plays no
role in this central optimization, because the utility transfer between the BS and the RPS does not
affect the centralized cost.

Note that the centralized optimal solution is based on the assumption that the BS and the RPS belong to the same operator. However,  the BS and the RPS may not have a common benefit such as they belong to different operators \cite{Ansari}.
In this sense, each one will seek to selfishly optimize its individual cost and, in this case,
the NE can be considered as the solution in such a decentralized system.
It is clear that the decentralized operation is less efficient in terms of the total
system profit. In next subsection, we will propose a contract-based approach which can be used to improve the decentralized operation.

\subsubsection{Comparison of Solutions and Coordination via Contracts}
According to (\ref{nash-s}) and (\ref{svbar}) , the first-order conditions
are $s\nu = \ln(1+b)$ in the centralized solution and
$s\nu = \ln(1+\alpha b)$ in the Nash solution. Hence, the two solutions
are not equal when $\alpha < 1$. And, as discussed earlier, the NE
in the $\alpha = 1$ case is an unstable system.

The magnitude of the inefficiency of a NE
is typically quantified by comparing the costs
under the centralized and Nash solutions. We compute the competition
penalty~\footnote{Note that this concept is different from the popular concept of price of anarchy\cite{Saad} which is defined as the ratio between the ``worst equilibrium'' and the optimal ``centralized'' solution.}, which is defined as the percentage increase in \textit{variable cost} of the NE over
the centralized solution,
\begin{equation}
\mathcal{P}_{\alpha,c_s} = \frac{C_r^* + C_o^*-C(\bar{s},\bar{\nu})}{C(\bar{s},\bar{\nu})}= \frac{C_r^* + C_o^*}{C(\bar{s},\bar{\nu})} - 1.
\end{equation}

It is clear that, in general, the decentralized operation is less efficient in terms of the total
system profit. A coordination mechanism based on static transfer payments can be used to improve the
decentralized operation. In particular, the transfer payment  modifies the cost functions
in (\ref{Cos2}) and (\ref{Cr2}) for the BS and RPS, respectively,~to
\begin{equation*}
\begin{split}
&\tilde{C}_o(s,\nu) = C_o(s,\nu)-\varepsilon'(s,\nu),\\
&\tilde{C}_r(s,\nu) = C_r(s,\nu)+\varepsilon'(s,\nu).
\end{split}
\end{equation*}

The choice of $\varepsilon'$ that coordinates the system is not unique.
One possibility is to define the transfer in such a way that the modified cost functions replicate
a cost-sharing situation. That is, we can set $\varepsilon$ such
that $\tilde{C}_r(s,\nu) =  \varepsilon C$, $\tilde{C}_o(s,\nu) =  (1-\varepsilon) C$
where $\varepsilon \in [0,1]$ is a splitting factor and $C$ is the
centralized cost function defined in equation (\ref{centralc}).
Such a cost sharing forces the transfer payment to satisfy,
\begin{equation*}
\varepsilon'(s,\nu) = \varepsilon C_o(s,\nu) -(1-\varepsilon)C_r(s,\nu).
\end{equation*}

With such a payment transfer, the RPS and the BS have aligned objectives and
the centralized solution $(\bar{v},\bar{s})$ is the unique NE.
The BS and RPS must be better off under the NE with the transfer
payments than under the NE without
the transfer payments, for example,
\begin{equation*}
C_r^*  \geq \varepsilon C , ~C_o^* \geq (1-\varepsilon)C.
\end{equation*}
After algebraic manipulation, we get
\begin{equation*}
\varepsilon \in \left[\frac{C_r^*}{C}- \mathcal{P}_{\alpha,c_s},~ \frac{C_r^*}{C} \right] \cap \big[0,1\big].
\end{equation*}

From a practical perspective, a direct cost-sharing negotiation based on total cost
probably dominates the transfer-payment approach
because it does not require any special accounting of
cost (in terms of supply and backlogged costs).

\textit{\textbf{Remark 3}}. A cooperative game may yield better results than the \textit{pure} non-cooperative result. In cooperative games, there are various solution concepts, i.e., Nash bargaining solution, core, Shapley value and so on. In order to enable the cooperative formulation, we should define the profit/energy cost saving that each
single player could generate.  For example, we can define the
single player's payoff as the utility it obtains in the non-cooperative setting.
Then, the cooperative solution (e.g., via a Shapley value) will tell us how important each player is to the overall cooperation, and what payoff he or she can reasonably expect.
Indeed, the non-cooperative game with contract coordination also generates a system optimal solution, and thus the coordination can be viewed as transferable utilities in a cooperative setting.
In addition, cooperative games which mainly study the interactions
among coalitions of players a suitable framework for handling
dense networks and can be studied in future work.

\section{Capacity Allocation Game with Multiple BSs}
In this section, we consider a setting in which a single \textit{monopolistic} RPS sells energy to a set of $\mathcal{N} = \{1...,N\}, N\geq 2$ BSs.
The RPS can produce no more than $\mu_0$ units during the period.
We consider that UEs of each BS cannot switch to another BS. This is the case for which BSs are owned by different operators~\footnote{If the BSs are under a price competition in spectrum accessing market.
The energy allocation mechanism and the spectrum access scheme will affect each other. As a result,
the properties of allocation mechanisms would be significantly different, and are not currently investigated here.}. BSs are considered to have different arrival rates due to a variety of reasons including geographic locations, operator promotion plans, and pricing strategies.
The RPS has a monopoly over the energy market and is thus considered to not bear backlogged cost and the load increasing cost, i.e., $\alpha = 1$, $c_s = 0$ in (\ref{Cr2}).
We will investigate how RPS allocates the limited energy capacity to BSs, and how induce the BSs to truthfully report their optimal demand.
Toward this end, we will propose a noncooperative capacity allocation game and address its dominant
equilibrium.

First, we  examine the cost function of a BS in considered setting.
Let $\mathbf{\Gamma}=\{\mu_1,...,\mu_N\}$ be the energy supply rate vector of the RPS, and
$s_i$ be the energy reservation strategy of BS $i$.
Then, the normalized energy supply rate of BS $i$ is denoted as $\nu_i$ and the
backlogged cost of BS $i$ is $b_i$.
As discussed in Section 4.2.2, as $\alpha\rightarrow 1$, the RPS will set
its energy supply rate $\mu\rightarrow 0 $ for a BS.
In order to prevent this outcome in the monopolistic market,
we set that BS $i$ is charged an incentive price $p$ to the RPS for the supply rate $\mu_i$.
Based on the M/M/1 make-to-stock queuing analysis,
the steady-state expected normalized cost per unit time for BS $i$ is,
\begin{equation*}\label{Cgamecost}
\begin{split}
&\bar{C}_i(\mu_i,s_i,\lambda_i)\\
&= p\mu_i+P_1\lambda_i+P_2(\bar{\lambda}_i-\lambda_i) + C_o^i(s_i,\nu_i)|_{\alpha=1},\\
&= p\mu_i+P_1\lambda_i+P_2(\bar{\lambda}_i-\lambda_i)+s_i - \frac{1-(b_i+1)e^{-\nu_i s_i}}{\nu_i},
\end{split}
\end{equation*}
where $\lambda_i$ is the arrival rate of connections served with the renewable energy of BS $i$,
and $\bar{\lambda}_i$ is the total arriving rate of connections of BS $i$.
$C_o^i(s_i,\nu_i)|_{\alpha=1}$ is backlogged cost and store self-discharge cost defined in (\ref{Cos2}).

\begin{proposition}
The global minimum of $\bar{C}_i(\mu_i,s_i,\lambda_i)$ is the solution that satisfies,
\begin{equation*}
\hat{\mu}_i=\sqrt{\frac{\lambda_i\ln(1+b_i)}{p}}+\lambda_i,~\hat{s}_i=\sqrt{p\lambda_i\ln(1+b_i)}.
\end{equation*}
The resulting cost is
\begin{equation*}
\bar{C}_i^{\text{min}}(\hat{\mu}_i,\hat{s}_i,\lambda_i)=2\sqrt{p\lambda_i\ln(1+b_i)}+(p+P_1)\lambda_i+P_2(\bar{\lambda}_i-\lambda_i).
\end{equation*}
\end{proposition}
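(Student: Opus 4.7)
The plan is to carry out a two-stage optimization, first eliminating $s_i$ analytically for fixed $\mu_i$, and then reducing the remaining one-dimensional problem in $\mu_i$ to a textbook AM--GM minimization. This works because $\lambda_i$ enters only as a parameter: the terms $P_1\lambda_i + P_2(\bar{\lambda}_i-\lambda_i)$ are constants with respect to $(\mu_i, s_i)$ and can be set aside until the end.

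First, I would fix $\mu_i$ (equivalently $\nu_i = (\mu_i-\lambda_i)/\lambda_i > 0$) and minimize over $s_i \geq 0$. The relevant partial derivative is
\begin{equation*}
\frac{\partial \bar{C}_i}{\partial s_i} = 1 - (b_i+1)e^{-\nu_i s_i},
\end{equation*}
which is strictly increasing in $s_i$ and vanishes at $\nu_i s_i = \ln(1+b_i)$. This recovers the same first-order condition as equation~(\ref{nashs-1}) with $\alpha = 1$, so the inner minimizer is $s_i^\star(\nu_i) = \ln(1+b_i)/\nu_i$. Substituting back, the bracket $s_i - [1-(b_i+1)e^{-\nu_i s_i}]/\nu_i$ collapses to $\ln(1+b_i)/\nu_i = \lambda_i\ln(1+b_i)/(\mu_i-\lambda_i)$, leaving the reduced objective
\begin{equation*}
g(\mu_i) \;=\; p\mu_i + \frac{\lambda_i\ln(1+b_i)}{\mu_i-\lambda_i} + P_1\lambda_i + P_2(\bar{\lambda}_i-\lambda_i).
\end{equation*}

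Next, I would introduce $u = \mu_i - \lambda_i > 0$ and minimize $pu + \lambda_i\ln(1+b_i)/u$ over $u>0$. By AM--GM (or equivalently the first-order condition, since the expression is strictly convex in $u$ on $(0,\infty)$) the minimum is attained at $u = \sqrt{\lambda_i\ln(1+b_i)/p}$, with minimum value $2\sqrt{p\lambda_i\ln(1+b_i)}$. Unwinding the substitutions yields $\hat{\mu}_i$ and $\hat{s}_i = \lambda_i\ln(1+b_i)/u = \sqrt{p\lambda_i\ln(1+b_i)}$ as stated, and adding back the constant terms $p\lambda_i + P_1\lambda_i + P_2(\bar{\lambda}_i-\lambda_i)$ (the $p\lambda_i$ coming from $p\mu_i = p(u+\lambda_i)$) gives the claimed minimum cost.

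The only nontrivial issue is confirming this critical point is the \emph{global} minimum rather than a mere stationary point. The inner stage is handled by strict convexity of $\bar{C}_i$ in $s_i$ (the second derivative is $\nu_i(b_i+1)e^{-\nu_i s_i} > 0$) together with the boundary behaviors $\bar{C}_i \to \infty$ as $s_i \to \infty$ and $\partial_{s_i}\bar{C}_i|_{s_i=0} = 1-(b_i+1) < 0$, which rules out the corner $s_i=0$. For the outer stage, $g$ is strictly convex on $(\lambda_i,\infty)$ and blows up at both endpoints $\mu_i\downarrow\lambda_i$ and $\mu_i\to\infty$, so the unique interior critical point is the global minimizer. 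This boundary/convexity check is the one step requiring care, but it is otherwise routine; no deeper obstacle is expected.
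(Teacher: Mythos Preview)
Your proof is correct and complete, but it takes a somewhat different route from the paper's argument. The paper attacks the two-variable problem directly: it writes down both first-order conditions $\partial C'_i/\partial s_i = 0$ and $\partial C'_i/\partial \mu_i = 0$ simultaneously, solves the resulting system, and then (appealing to the Hessian-plus-boundary technique already used in Proposition~2) certifies that the critical point is a global minimum. You instead perform a nested minimization: first eliminate $s_i$ analytically using strict convexity in $s_i$, then reduce to a one-variable problem in $u = \mu_i - \lambda_i$ that is dispatched by AM--GM. Both approaches share the inner condition $\nu_i s_i = \ln(1+b_i)$; they diverge only in how the second variable is handled. Your route is more elementary---no Hessian computation is needed, and global optimality falls out of one-dimensional convexity and endpoint blow-up rather than a two-dimensional second-order check---while the paper's joint approach mirrors the structure of the earlier centralized analysis and so lets it simply cite that machinery.
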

\begin{proof}
We introduce the following auxiliary function,
\begin{equation*}
C'_i(\mu_i,s_i) =p\mu_i+s_i - \frac{1-(b_i+1)e^{-\nu_i s_i}}{\nu_i}.
\end{equation*}
The solution to the first-order conditions of $C'_i(\mu_i,s_i)$
\begin{equation*}
\begin{split}
\frac{\partial C'_i(\mu_i,s_i,\lambda_i)}{\partial s_i} = 0,\Rightarrow&s_i\nu_i=\ln(1+b_i),\\
\frac{\partial C'_i(\mu_i,s_i,\lambda_i)}{\partial \mu} =0,\Rightarrow&\Big[-(b_i+1)
(s_i\nu_i+1)\frac{e^{-s_i\nu_i}}{\nu_i^2}+\frac{1}{\nu_i^2}\Big]\frac{1}{\lambda_i}\\
&+p=0,\\
\end{split}
\end{equation*}
is given by $\hat{\nu}_i=\frac{\hat{\mu}_i-\lambda_i}{\lambda_i}=\sqrt{\frac{\ln(1+b_i)}{(p\lambda_i)}}$, $\hat{s}_i=\sqrt{p\lambda_i\ln(1+b_i)}$. Using a technique similar to Theorem 1, we can prove that the Hessian matrix of $C'_i(s_i,\nu_i)$ at $(\hat{s}_i,\hat{\nu}_i)$
is positive definite and the boundary values can be excluded.
Thus, the global minimum of $C'_i(\mu_i,s_i)$ is the unique solution
to the first-order conditions. And, the minimum cost is $2\sqrt{p\lambda_i\ln(1+b_i)}+p\lambda_i$.
\end{proof}

The individual BSs have private information regarding the
optimal energy demand, and will competitively submit their orders to the RPS.
The RPS allocate the energy supply rates to the BSs according to an allocation mechanism.
Define $\mathcal{A}=\{\boldsymbol{a}\in\mathbf{R}^N: \boldsymbol{a}_i\geq 0~\text{and}~ \sum_{i=1}^N\boldsymbol{a}_i\leq \mu_0\}$ as a set of allocations. We call each
$\boldsymbol{a}\in \mathcal{A}$ as a feasible allocation.
Let $\mathbf{m}$ be the renewable energy supply rate vector that BSs order with each element $m_i$ being the supply rate ordered by BS $i \in \mathcal{N}$. Let $\mathbf{m}_{-i}$ be the vector of
other BSs' orders. Then, we have the definition of an allocation mechanism.
\begin{definition}
An \emph{allocation mechanism} is a function $g$ that assigns a feasible allocation to each vector of
orders, $g(\mathbf{m})\in\mathcal{A}$. Then, $g_i(\mathbf{m})$ is BS $i$'s energy supply rate.
\end{definition}

The RPS can never allocate to a BS more than the BS orders, i.e., $g_i(\mathbf{m})\leq m_i$.
The RPS should choose an allocation mechanism $g$  and then broadcast the
mechanism to all BSs. All BSs will simultaneously submit their energy demand orders to the RPS according to the allocation mechanism $g$. Based on both the connection demand and
the noticed allocation mechanism, each BS $i$ determines its ordered energy supply rate $m_i$.
Due to the limited capacity of the RPS, BSs will compete with each other
in order to get their individual favorable energy supply. For instance, BSs can report larger energy demands rather than optimal values to get more allocations.
Such a competition leads to the formulation of a \textit{capacity allocation game} in which players are BSs, and the strategy of each player is its ordered renewable energy supply rate $m_i$.  The cost function of BS $i$ are given by $\bar{C}_i(\mu_i,s_i,\lambda_i)$ with $\mu_i=g_i(\mathbf{m})$. Then, we state the following equilibrium definition.

\begin{definition}
Assume that all BSs truthfully reporting their optimal demands, $\mathbf{m}^*$.
Then, reporting $\mathbf{m}^*$ is a \emph{dominant equilibrium} of the capacity allocation game under mechanism $g$, if and only if $\forall~\mathbf{m}$,
\begin{equation*}\label{de}
\bar{C}_i(g_i(m_i^*,\mathbf{m}_{-i}),s_i,\lambda_i) \leq \bar{C}_i(g_i(m_i,\mathbf{m}_{-i}),s_i,\lambda_i),~\forall~i\in\mathcal{N}.
\end{equation*}
\end{definition}

In a dominant equilibrium, each BS has an energy order that minimizes its cost regardless of the energy orders of the other BSs. A dominant equilibrium is a stronger notion of equilibrium than the Nash equilibrium (NE) in which each player is assumed to know the equilibrium strategies of the other players, and no player has anything to gain by changing only its own strategy unilaterally \cite{Osborne}. In this regard, the NE definition of the game can be obtained by replacing $\mathbf{m}_{-i}$ with $\mathbf{m}_{-i}^*$. We are particularly interested in RPS's allocation mechanism under which the BSs report their optimal energy supply demands in a dominant equilibrium.

More specifically, if BSs ordered exactly their needs, the RPS could determine how much capacity it needs to allocate. Conversely, manipulation may generate undesirable consequences for the system, i.e., preventing the RPS from determining which BS is in reality needing the most energy. Some BSs with high expected demands may receive too little and others with low expected demand may receive too much. At the end, the system ends up serving all BSs poorly. Note that, the allocation mechanism design implies that all BSs will observe the same price for green energy. Such a model allows to capture scenarios in which having an auction mechanisms may be too complicate or in which there is a single, unified price imposed by a utility company or the government.

In what follows, we will study the mechanism design properties, and subsequently we will investigate the equilibrium of the capacity allocation game.
The two main challenges of mechanism design are stability and efficiency.
Two typical criteria are generally applied to mechanism design:
incentive compatibility and optimality defined as follows:
\begin{definition}
An allocation mechanism $g$ is said to be \emph{incentive compatible} (IC) or truth-inducing if the case in which all BSs place their orders truthfully at their optimal profits constitutes an dominant equilibrium of $g$.
\end{definition}
\begin{definition}
An allocation mechanism $g$ is a \emph{Pareto allocation mechanism} if it maximizes
the sum of BSs' profits assuming all BSs truthfully submit their optimal orders.
\end{definition}

In general, Pareto optimality is a state of allocation of resources in which it is impossible to make any one individual better off without making at least one individual worse off.
Here, we use the definition of Pareto allocation mechanism following \cite{Cachon}.
Note, the social welfare maximization implies Pareto optimality, whereas the versa does not hold.
One of the most popular allocation mechanism is the so-called proportional allocation, for which:
$g_i(\mathbf{m})=\min\{m_i,\mu_0\frac{m_i}{\sum_{j=1}^Nm_j}\}.$

In \cite{Cachon}, it is shown that for a kind of utility
functions, the proportional allocation is actually a Pareto allocation mechanism.
Next, we show that the proportional allocation mechanism is not a Pareto
mechanism when the total renewable power production capacity  is less than total
BSs' orders.
\begin{lemma}
If BSs have different connection arrival rates and order truthfully their optimal energy supply rates $\mathbf{m}$ ($\mu_0<\sum_i^N{m_i}$), the proportional allocation is not a Pareto mechanism.
\end{lemma}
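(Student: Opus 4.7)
The plan is to reduce the bivariate cost $\bar{C}_i(\mu_i,s_i,\lambda_i)$ to a univariate function of $\mu_i$ by optimizing out the inventory decision $s_i$, and then compare the Pareto-optimal allocation with the proportional one. First, I would substitute the inner first-order condition $s_i\nu_i=\ln(1+b_i)$ established in Proposition~2 back into $\bar{C}_i$. Since $(b_i+1)e^{-\nu_i s_i}=1$ at this stationary point, the backlog term collapses and the $\mu_i$-dependent part of the cost reduces to
\begin{equation*}
\tilde{C}_i(\mu_i) \;=\; p\mu_i + \frac{\lambda_i\ln(1+b_i)}{\mu_i-\lambda_i},
\end{equation*}
while the remaining contribution $P_1\lambda_i+P_2(\bar{\lambda}_i-\lambda_i)$ is independent of $\mu_i$ and thus irrelevant to the Pareto comparison.

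Next, I would formulate the Pareto benchmark as the minimization of $\sum_{i=1}^N \tilde{C}_i(\mu_i)$ over the feasible set $\{\mu_i>\lambda_i,\ \sum_i\mu_i\le\mu_0\}$. Because the assumption $\mu_0<\sum_i m_i=\sum_i\hat{\mu}_i$ makes the capacity constraint binding with some multiplier $\beta>0$, the KKT conditions yield
\begin{equation*}
\mu_i^{PO} \;=\; \lambda_i + \sqrt{\frac{\lambda_i\ln(1+b_i)}{p+\beta}},\qquad i=1,\dots,N,
\end{equation*}
and uniqueness follows from the strict convexity of each $\tilde{C}_i$ on $(\lambda_i,\infty)$ (its second derivative is $2\lambda_i\ln(1+b_i)/(\mu_i-\lambda_i)^3>0$). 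By contrast, the proportional rule returns $\mu_i^{PA}=k\hat{\mu}_i=k\bigl(\lambda_i+\sqrt{\lambda_i\ln(1+b_i)/p}\bigr)$ with $k=\mu_0/\sum_j\hat{\mu}_j\in(0,1)$; note that $k<1$ ensures $\mu_i^{PA}<\hat{\mu}_i$, so the clipping in $\min\{m_i,\mu_0 m_i/\sum_j m_j\}$ is never active and the formulas above are directly comparable.

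Finally, I would show that these two allocations cannot coincide under the lemma's hypotheses. Equating $\mu_i^{PA}=\mu_i^{PO}$ and rearranging gives
\begin{equation*}
(k-1)\sqrt{\lambda_i} \;=\; \sqrt{\ln(1+b_i)}\left(\frac{1}{\sqrt{p+\beta}} - \frac{k}{\sqrt{p}}\right),\qquad i=1,\dots,N.
\end{equation*}
Taking $b_i$ identical across BSs for the cleanest instance, the right-hand side is independent of $i$ while the left-hand side is proportional to $\sqrt{\lambda_i}$; since the $\lambda_i$ are distinct, the only way for equality to hold is $k=1$, which contradicts $\mu_0<\sum_j\hat{\mu}_j$. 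Hence $\mu^{PA}\neq\mu^{PO}$, and by strict convexity of the aggregate social cost we have $\sum_i\tilde{C}_i(\mu_i^{PA})>\sum_i\tilde{C}_i(\mu_i^{PO})$, so the proportional mechanism is not Pareto.

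The main obstacle is ensuring that the reduction to $\tilde{C}_i(\mu_i)$ is justified as a sub-problem of the Pareto program. Concretely, one must argue that a Pareto-optimal mechanism induces each BS to optimize $s_i$ given its assigned $\mu_i$ (so the inner first-order condition does apply), and that the clipping branch of the proportional rule is inactive at the interior allocations being compared. Both points follow from the strict monotonicity and convexity of $\tilde{C}_i$ in $s_i$ and from $k<1$, respectively, and once these are in place the algebraic contradiction above closes the argument.
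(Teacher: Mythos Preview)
Your proof is correct and takes a genuinely different route from the paper's argument. The paper does \emph{not} freeze $\lambda_i$ and optimize the allocation $\mu_i$; instead it treats the traffic split $\lambda_i$ itself as the decision variable, imposes the coupling $\mu_i=\hat{\mu}_i(\lambda_i)=\sqrt{\lambda_i\ln(1+b_i)/p}+\lambda_i$ from Proposition~3 as a hard constraint, writes the Lagrangian with multipliers on the capacity and on $\lambda_i\le\bar{\lambda}_i$, and then derives from the KKT stationarity in $\lambda_i$ that the Pareto optimum forces $\lambda_1^*=\cdots=\lambda_N^*$ (under the same implicit equal-$b_i$ assumption you invoke). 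The contradiction is then with the heterogeneity of arrival rates rather than with an explicit comparison of two allocation vectors.

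Your approach---reduce to the univariate $\tilde{C}_i(\mu_i)=p\mu_i+\lambda_i\ln(1+b_i)/(\mu_i-\lambda_i)$, solve the capacity-constrained social problem directly, and compare $\mu^{PO}$ to $\mu^{PA}$---is arguably closer to the definition of an allocation mechanism (which maps orders to $\mu_i$'s, not to $\lambda_i$'s) and yields an explicit structural mismatch between the two allocations. The paper's route, by contrast, builds the reduction on the BS's joint optimum $(\hat{\mu}_i,\hat{s}_i)$ and so folds the BS's full response into the welfare program; this is more faithful to the setting where each BS also controls its renewable share $\lambda_i$, at the cost of a less transparent contradiction. Both arguments ultimately rely on the equal-$b_i$ simplification to collapse the stationarity conditions to a single scalar equation across $i$; you make this explicit, whereas the paper leaves it tacit in its final step ``from equation~(23), we derive $\lambda_1^*=\cdots=\lambda_N^*$.''
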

\begin{proof}
Based on Proposition 3, the optimization problem of total BSs' profits
can be formulated as follows:
\begin{equation*}\vspace{-0.5cm}
\begin{split}
&\min_{\mathbf{\Gamma}\in\mathcal{A}}\sum_{i=1}^{N}\Big[2\sqrt{p\lambda_i\ln(1+b_i)}
+(p+P_1-P_2)\lambda_i+P_2\bar{\lambda}_i\Big].\\
\vspace{-0.6cm}
&\text{s.t}.~~~~
\end{split}
\end{equation*}\vspace{-0.5cm}
\begin{equation}
\mu_i = \sqrt{\frac{\lambda_i\ln(1+b_i)}{p}}+\lambda_i,
\end{equation}\vspace{-0.5cm}
\begin{equation}
\sum_{i=1}^N \mu_i\leq \mu_0,~\lambda_i\leq \bar{\lambda}_i,~\mu_i\geq 0.~i\in\mathcal{N}.
\end{equation}\vspace{-0.5cm}

The Larangian function of the original problem is,
\begin{equation*}
\begin{split}
&\mathcal{L}(\epsilon,\eta,\lambda)\\
&=\sum_{i=1}^{N}\Big[2\sqrt{p\lambda_i\ln(1+b_i)}+(p+P_1)\lambda_i+P_2(\bar{\lambda}_i-\lambda_i)\Big]\\
&+ \epsilon(\sum_{i=1}^N\mu_i-\mu_0)+ \sum_{i=1}^{N}\Big[\eta_i(\lambda_i-\bar{\lambda}_i)\Big].
\end{split}
\end{equation*}
where $\epsilon\geq0$ and $\eta_i\geq 0$, $i=1,2,...,N$.

The optimal solution $\lambda_1^*,...,\lambda_N^*$ satisfies
the necessary Karush-Kuhn-Tucker (KKT) conditions which are,
\begin{equation}
\frac{\partial}{\partial \lambda_i^*}\mathcal{L}(\epsilon,\eta,\lambda^*)=0,~i=1,...,N,
\end{equation}\vspace{-0.3cm}
\begin{equation}
\epsilon(\sum_{i=1}^N\mu_i^*-\mu_0)= 0, \sum_{i=1}^N\mu_i^*-\mu_0 \leq 0,
\end{equation}\vspace{-0.3cm}
\begin{equation}
\eta_i(\lambda_i^*-\bar{\lambda}_i) = 0,~\lambda_i^*-\bar{\lambda}_i \leq 0,~\lambda_i^*\geq 0,~i\in\mathcal{N}.
\end{equation}
where $\mu_i^*$ is defined by (18) with $\lambda_i=\lambda_i^*$.
By further developing (20), we obtain,
\begin{equation}
\sqrt{\lambda_i^*}=\frac{(2p+\epsilon)\sqrt{\ln(1+b_i)}}{2\sqrt{p}(P_2-p-P_1-\epsilon-\eta_i)}.
\end{equation}

Assume that the proportional allocation mechanism is a Pareto mechanism, we then have
$\mu_i^* = g_i(\mathbf{m})$ = $\mu_0\frac{m_i}{\sum_{j=1}^Nm_j} < \bar{\lambda}_i$.
Then, from (22), we get $\eta_i=0~\forall~i \in \mathcal{N}$.
Thus, from equation (23), we derive
$\lambda_1^*= \lambda_2^*=...=\lambda_N^*$ which contradicts with the proportional allocation.
This completes the proof.
\end{proof}

If $P_2\leq P_1+p$, all BSs will order a zero renewable energy supply rate, thus we just consider
the case in which $P_2> P_1+p$.
The solution to  $\bar{C}_i^{\text{min}}(s_i,\nu_i,\lambda_i)$$=P_2\bar{\lambda_i}$
is $\hat{\lambda}_i=\frac{4c\ln(1+b_i)}{(P_2-P_1-p)^2}$.
The objective function of the optimization problem is a concave function as shown in
Fig. \ref{optimality}.
Thus, the global minimum is achieved at the boundary.
In this respect, when all BSs have an identical $b_i,i\in\mathcal{N}$, a simple \textit{Pareto mechanism} discipline can be designed as follows. Assuming that all BSs truthfully report their demand, a larger demand will be satisfied with a certain priority. If a BS receives an allocation that is less $\hat{\lambda}_i$,  the allocation will be rejected. The RPS then adjusts the allocation for this BS to zero, and the BS will wait for the next allocation period to announce its demand.
\begin{figure}[t]
\centering
\includegraphics[width=3.0in]{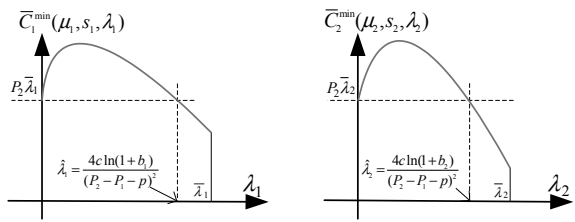}
\vspace{-0.2cm}
\caption{An example to show the concavity of the objective function.}\label{optimality}
\end{figure}

When the demand exceeds the capacity size, the BSs may inflate orders so as to be
allocated more than they need. Hence, by increasing the order quantity, each BS
is able to decrease the allocation quantity for competitors.
Next, we investigate how to design a mechanism which can induce the BSs truthfully to report their energy demand.  We arrange the BS in decreasing order of their order quantities, i.e., $\{m_1\geq m_2\geq...\geq m_N\}$, $\hat{n}$ be the largest integer less than or equal to $N$ such that
$g_{\hat{n}}(\mathbf{m},\hat{n})\leq m_{\hat{n}}$.
Then, $\frac{1}{\hat{n}} (\mu_0-\sum_{\hat{n}+1}^Nm_j)$ means that BSs other than those whose indices are larger than $\hat{n}$, will get an \textit{uniform} allocation. Notice $\hat{n}$ is inherently determined by the allocation mechanism, and is not specified by the RPS.

In this regard, we propose the adaptive uniform allocation mechanism shown in Algorithm~1. Under such a mechanism, initially, the BSs with orders less than a threshold index $\hat{n}$ receive the same energy supply rate as their orders, and the rest of the BSs receive $m_i$.  Then, the remainder of the capacity is divided by the number of BSs indexed greater than $\hat{n}$. If a BS receives an allocation in less than $\hat{\lambda}_i$, it will reject the allocation. The RPS adjusts the allocation for this BS to zero. Thus, it also can be viewed as a take-or-leave choice for each BS. Note that the uniform allocation mechanism always favors small BSs.
\begin{algorithm}[t]
\caption{Proposed adaptive uniform allocation mechanism.}
\label{alg:Framwork}
\begin{algorithmic}[1]
\REQUIRE~~\\
\STATE Arrange the BS in decreasing order of their order quantities, i.e., $\{m_1\geq m_2\geq...\geq m_N\}$. $\hat{n}$ is the \textit{largest} integer less than or equal to $N$ such that
$g_{\hat{n}}(\mathbf{m},\hat{n})\leq m_{\hat{n}}$ where,
\begin{equation}
g_i(\mathbf{m},\hat{n}) = \left\{
\begin{aligned}
&\frac{1}{\hat{n}} \Big(\mu_0-\sum_{\hat{n}+1}^Nm_j\Big),i\leq \hat{n},\\
&m_i,~~i>\hat{n},
\end{aligned}
\right.
\end{equation}
\ENSURE ~~\\
\IF{$0<g_i(\mathbf{m})\leq \hat{\mu}_i = \sqrt{\hat{\lambda}_i\frac{\ln(1+b)}{p}}+\hat{\lambda}_i$, $i\in \mathcal{N}$}
\STATE The BS $i$ sends a message indexed $i$ to the RPS;
\ENDIF
\REQUIRE~~\\
\IF{the RPS receives a message $i$}
\STATE $g_i(\mathbf{m}) = 0$;
\ENDIF
\renewcommand{\algorithmicensure}{\text{Output:}}
\ENSURE  $g(\mathbf{m})$, which is an energy supply rate allocation for BSs.
\end{algorithmic}
\end{algorithm}

\begin{theorem}
If the RPS uses the proposed allocation mechanism in Algorithm 1, it will minimize the cost for each BS
to report its optimal energy demand, regardless of the energy orders of the other BSs, thereby reaching a dominant equilibrium.
Moreover, the proposed mechanism is truth-inducing.
\end{theorem}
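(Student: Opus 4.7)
The plan is to establish that, for any BS $i$ and any fixed vector $\mathbf{m}_{-i}$ of the other BSs' reports, choosing $m_i=\hat{\mu}_i$ (the optimal demand identified in Proposition~3) minimizes $\bar{C}_i$. Since $\bar{C}_i(\mu_i,s_i,\lambda_i)$ is strictly convex in $(\mu_i,s_i)$ with unique minimizer $(\hat{\mu}_i,\hat{s}_i)$, BS $i$ simply wants its realized supply rate $\mu_i=g_i(m_i,\mathbf{m}_{-i})$ to land as close as possible to $\hat{\mu}_i$.

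First I would prove the key structural property: for fixed $\mathbf{m}_{-i}$, Algorithm~1 produces
\begin{equation*}
g_i(m_i,\mathbf{m}_{-i})=\min\{m_i,\,U^*\},
\end{equation*}
where $U^*=U^*(\mathbf{m}_{-i})$ is a water-filling level determined entirely by the others' orders through $|S|\,U^*+\sum_{j\notin S}m_j=\mu_0$, with $S\subseteq\mathcal{N}\setminus\{i\}$ the set of other BSs whose orders strictly exceed $U^*$. The essential point is that $U^*$ does not depend on $m_i$: if $m_i\leq U^*$, BS $i$ lies in the group $\{j>\hat{n}\}$ and is allocated its full order $m_i$; if $m_i>U^*$, BS $i$ joins the top-$\hat{n}$ group, its own order drops out of the bottom sum, and $\frac{1}{\hat{n}}(\mu_0-\sum_{j>\hat{n}}m_j)$ evaluates to the same $U^*$.

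Given the structural lemma, the dominance argument is almost immediate. As $m_i$ ranges over $[0,\infty)$, the set of induced allocations is exactly $[0,U^*]$. If $\hat{\mu}_i\leq U^*$, reporting $m_i=\hat{\mu}_i$ delivers $g_i=\hat{\mu}_i$, which is the global minimizer of $\bar{C}_i$. If $\hat{\mu}_i>U^*$, the best achievable allocation is $U^*$, and the truthful report $m_i=\hat{\mu}_i\geq U^*$ attains it; strict convexity rules out any profitable deviation in either direction. The take-or-leave trigger in Algorithm~1 covers the extreme subcase in which $U^*$ is so small that accepting it would exceed the all-grid cost $P_2\bar{\lambda}_i$: BS $i$ then rejects, $g_i$ is reset to $0$, and BS $i$ is weakly better off than under any accepted tiny allocation. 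Taken together, the subcases confirm that $m_i=\hat{\mu}_i$ is a best response against every $\mathbf{m}_{-i}$, which simultaneously yields the dominant-equilibrium conclusion and incentive compatibility in the sense of Definition~4.

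The main obstacle is proving the invariance of $U^*$ under perturbations of $m_i$. Although intuitively clear from the water-filling nature of the mechanism, it needs care because both the threshold index $\hat{n}$ and the identity of the top group can shift as $m_i$ crosses $U^*$. I would handle this by parameterizing over $\hat{n}$ and verifying that the selection rule ``largest $\hat{n}$ with $g_{\hat{n}}(\mathbf{m},\hat{n})\leq m_{\hat{n}}$'' returns the same uniform level regardless of whether BS $i$'s own order is counted in the numerator sum $\sum_{j>\hat{n}}m_j$ or in the denominator count $\hat{n}$, so that moving $m_i$ through the threshold preserves the budget identity $\sum_i g_i=\mu_0$ and hence $U^*$ itself.
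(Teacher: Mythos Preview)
Your approach is correct and in fact cleaner than the paper's. The paper never isolates the structural lemma $g_i(m_i,\mathbf{m}_{-i})=\min\{m_i,U^*(\mathbf{m}_{-i})\}$; instead it runs a direct case analysis (inflating versus deflating $m_i$, crossed with whether BS $i$ was fully served or capped under truthful reporting) and argues informally in each case that no deviation helps. Your route---prove the water-filling representation first, then invoke convexity of the optimized cost $p\mu_i+\lambda_i\ln(1+b_i)/(\mu_i-\lambda_i)$ in $\mu_i$---collapses all those cases into a single monotonicity argument and makes the dominance claim transparent. What the paper's version buys is that it sidesteps the bookkeeping you flag as the main obstacle (tracking how $\hat{n}$ shifts as $m_i$ crosses the threshold), at the price of a looser argument that mixes reasoning about BS $i$'s own deviations with reasoning about other BSs' deviations.

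One small correction to your water-filling identity: as written, $|S|\,U^*+\sum_{j\notin S}m_j=\mu_0$ omits BS $i$'s own slot in the capped group. The level you want is the one obtained when BS $i$ is counted among the capped BSs, i.e., $(|S|+1)\,U^*+\sum_{j\in(\mathcal{N}\setminus\{i\})\setminus S}m_j=\mu_0$; otherwise $U^*$ would not be the cap that BS $i$ actually faces. With that fix, your invariance argument goes through exactly as you outline: when $m_i$ drops below $U^*$, BS $i$ leaves the numerator count and enters neither sum (it is simply allocated $m_i$), while the uniform level seen by the remaining capped BSs adjusts upward---but BS $i$'s own allocation is still $\min\{m_i,U^*\}$ with the same $U^*$.
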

\begin{proof}
Assume that all BSs truthfully report their optimal demands, $\mathbf{m}^*$. Then, we
should prove that reporting $\mathbf{m}^*$ is a dominant equilibrium under the adaptive uniformly allocation mechanism. We observe from Algorithm 1 that, by ordering more than $m_i^*$, there will be two cases for BS $i$.

The first is that BS $i$ received an allocation that is less than $m_i^*$ under truthful reporting. In this case, inflating the order cannot result in more allocation, since the BSs whose indices are smaller than $\hat{n}$ will be uniformly allocated an energy supply rate.
The second case is the one in which a certain BS $i$ received $m_i^*$ under the truthfully reporting.
In this situation, if there are some other BSs to inflate their orders, BS $i$ still receives $m_i^*$
because the mechanism always favors a BS with a smaller demand. Also, if some other BSs
reduce their orders, there will be enough available capacity for BS $i$ to get its optimal energy supply rate demand $m_i^*$ under such a mechanism.

Similarly, if a BS $i$ received a zero allocation, ordering less $m_i^*$ still results in a zero allocation. For a BS $i$ which received an allocation in less than $m_i^*$, ordering less $m_i^*$ cannot increase its allocation.
For the remaining cases in which a BS $i$ received $m_i^*$, ordering less than $m_i^*$ will reduce the its allocation. Thus, using the proposed mechanism, all BSs will truthfully report their optimal energy demands, which also constitutes a dominant equilibrium.
\end{proof}

It is easy to see that adaptive uniformly allocation is not necessarily efficient. But,
by choosing the IC mechanism, the RPS can acquire truthful energy demand information of BSs.
This could lead the RPS's secure decision-makings on capacity planning and sales planning.
\vspace{-0.35em}
\section{Simulation Results and Analysis}
In this section, we study how the RPS allocate its energy supply rate
with a limited capacity and how the BS
optimize their renewable energy storage based on the predictable traffic
condition. Further, the energy supply allocation with multiple BSs is also shown.
\vspace{-0.35em}
\subsection{Parameters setting}
For our simulations, we use the following parameters.
The time unit is set to one minute,  while the dynamic power consumption
coefficient is $24$ W (Joule/s) per connection, which is similar to that in \cite{Niyato}\cite{Arnold}.
Without loss of generality, the time unit $t_d\cdot\text{TTI}$ is set to be $1$ second.
Thus, the energy unit is $24$ Joule.
According to a report from an independent carbon footprints research group,
the average national electrical price is $12$ cents/kWh in the US
\cite{Eprice}. Other countries such as
Denmark, Germany and Spain may have more expensive electricity prices that can exceed $30$ cents/kWh.
For the green energy price,  a green pricing utility program in US department of energy, shows that
prices of different green power suppliers vary from about $0.5$ cents/kWh to $4.5$ cents/kWh
\footnote{The price of green energy is often expressed as a price ``premium'' above the
price of conventional power  in order to use the blended green and conventional
power\cite{Gprice,Guide}.} \cite{Gprice}. We can observe that the green power could be much cheaper than the conventional power.

We use cents as the power price unit. For example, the renewable energy has a low price
(e.g., $2.5$ cents per kWh), and the electric energy has a high price (e.g., $25$ cents per kWh).
These values fall within the range of the prices discussed in \cite{Eprice} and \cite{Gprice}.
Then, the energy cost of a connection served by the electric source will be 0.01 cents per minute,
and the energy cost of a connection served by the renewable source is 0.001 cents per minute.
The energy reservation cost factor $c$ is set to 0.001.
Thus, the reservation cost with $I_s$ units average energy storage is $0.001I_s$ cents per minute.
The units of the backlogged cost $b$, and the supply cost of the RPS $c_s$ are also set to cents per minute. In simulations, the cost accounts for 1000 minutes (16.7 hours), i.e,
we can set $P_1 = 1$, $P_2=10$ and $b=10$ cents per 1000 minutes.

\begin{figure}[t]
\centering
\includegraphics[width=80mm]{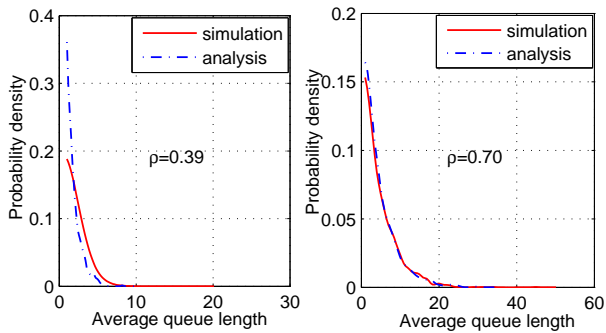}
\vspace{-0.3cm}
\caption{The PDF in different traffic conditions.}\label{queuelength}
\end{figure}
\begin{figure}[t]
\centering
\includegraphics[width=65mm]{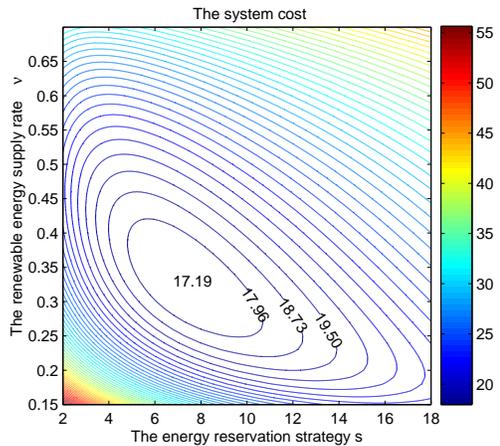}
\vspace{-0.5cm}
\caption{The central system cost.}\label{central}
\end{figure}
\begin{figure*}[t] \centering
\subfigure[The minimum system cost.]{ \label{fig:subfig:a}
\includegraphics[width=2.2in]{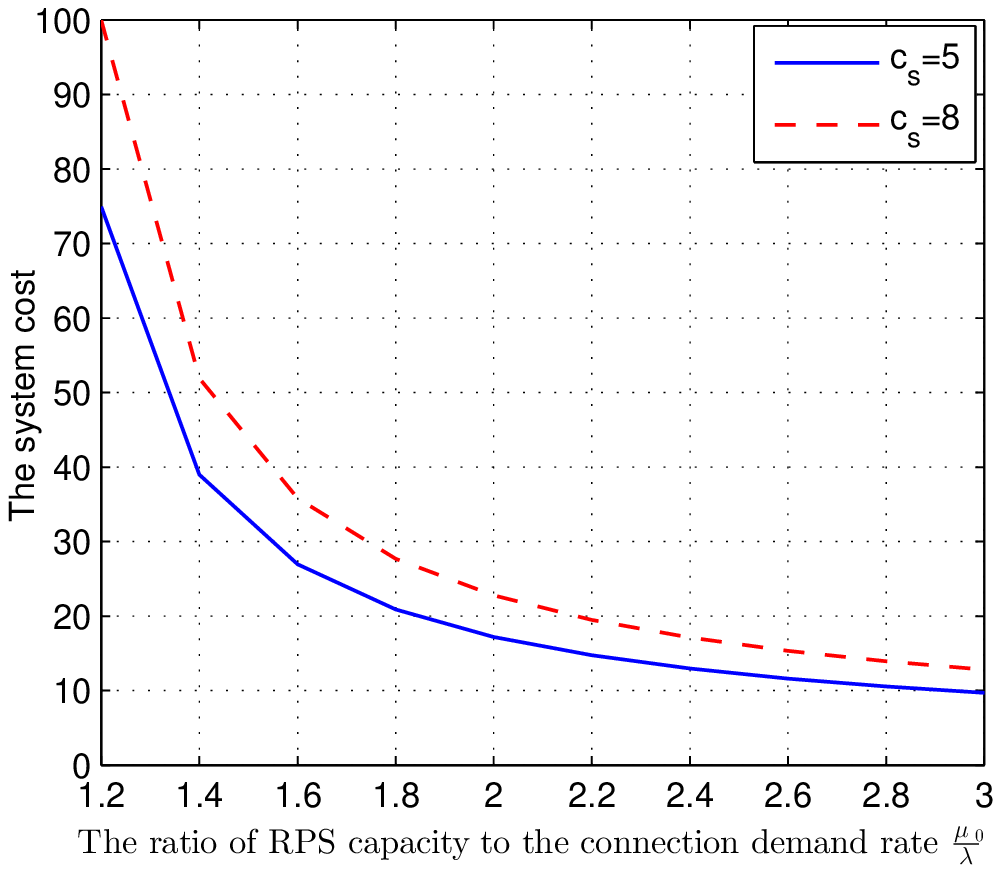}}\hspace{-0.1in}
\subfigure[Optimal energy supply rate.]{ \label{fig:subfig:b}
\includegraphics[width=2.2in]{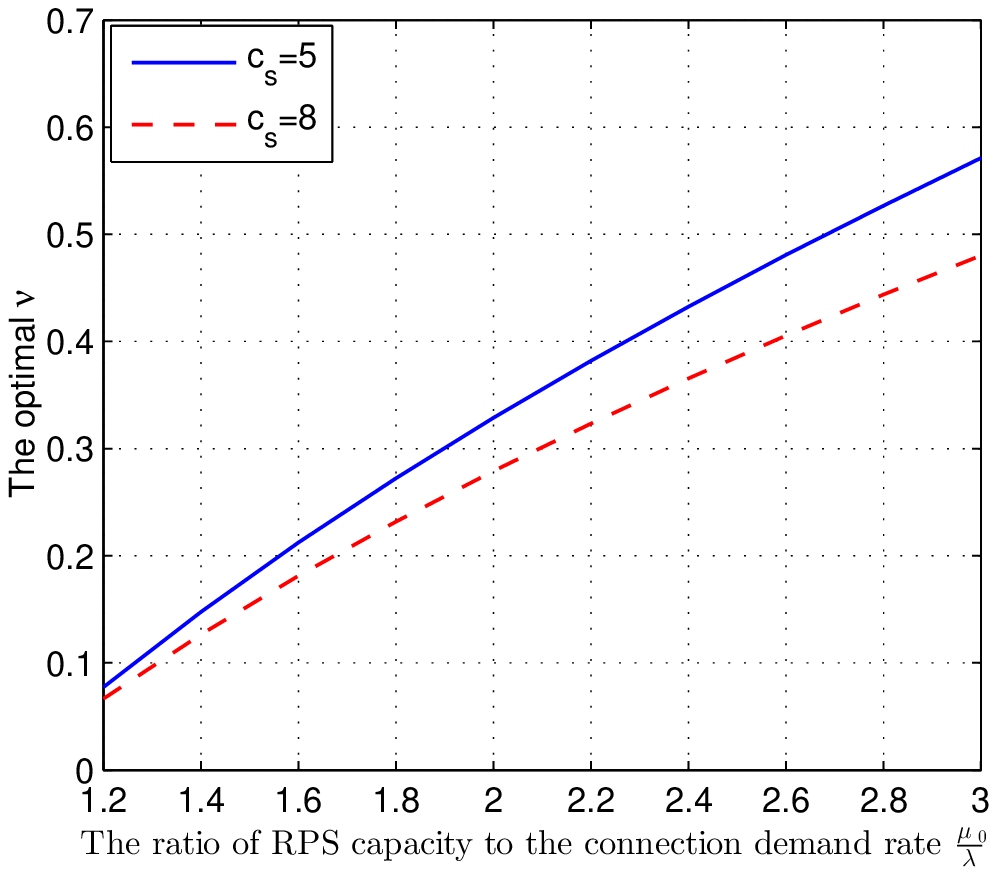}}\hspace{-0.1in}
\subfigure[Optimal energy reservation.]{ \label{fig:subfig:b}
\includegraphics[width=2.2in]{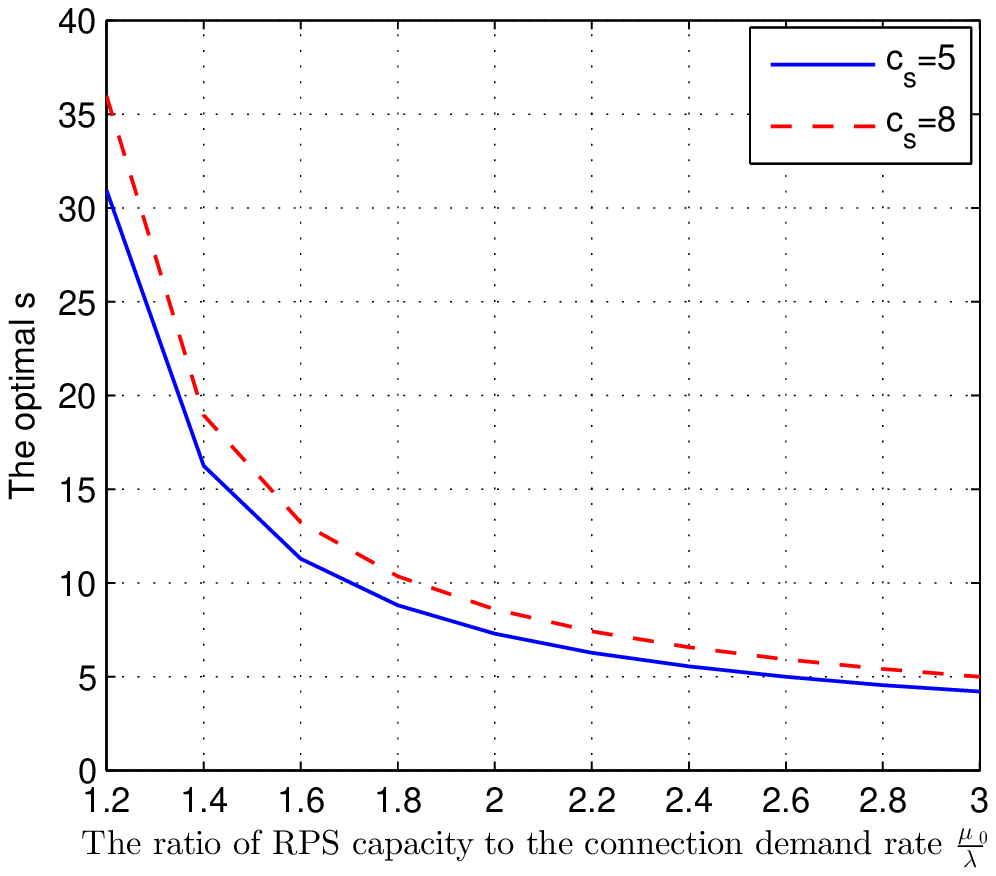}} \vspace{-0.3cm}
\caption{The minimum system cost,
 optimal energy supply rate and energy reservation with different parameters.} \label{central2}
\end{figure*}

\begin{figure*} \centering
\subfigure[The cost of BS with $\nu=\nu^*$]{ \label{fig:subfig:a}
\includegraphics[width=2.2in]{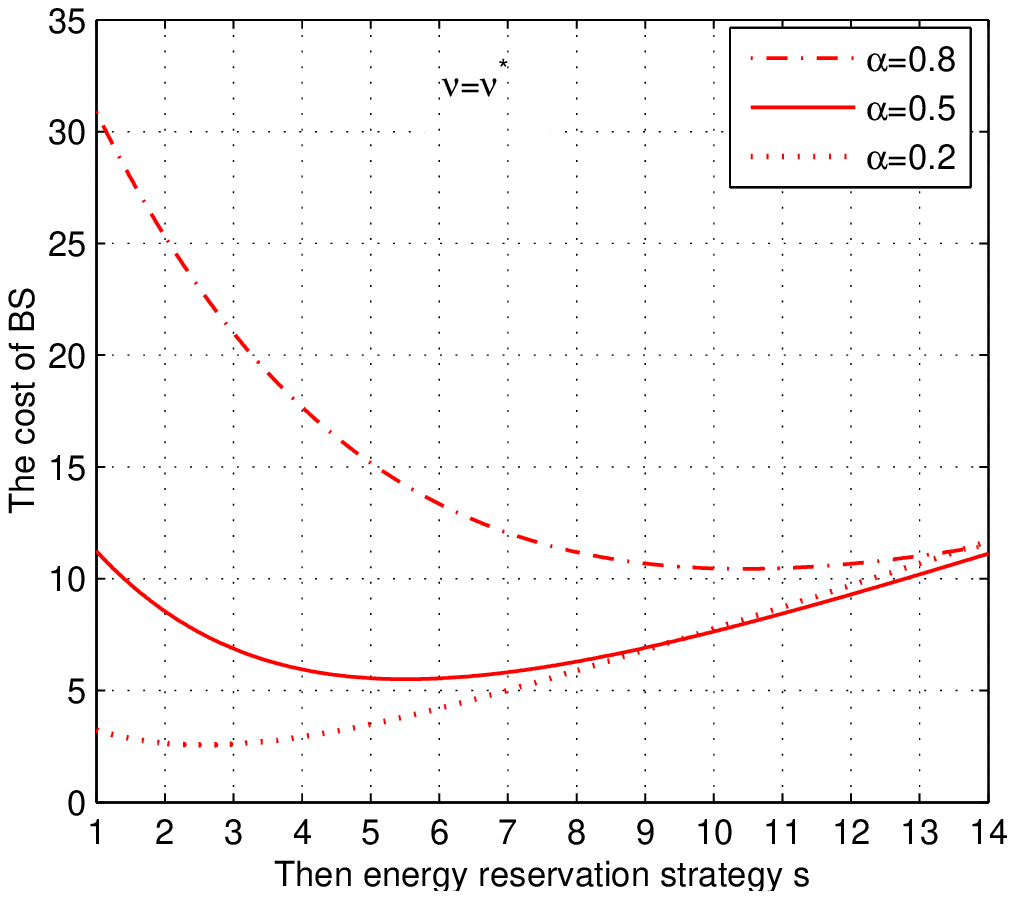}}\hspace{-0.1in}
\subfigure[The cost of RPS with $s=s^*$.]{ \label{fig:subfig:b}
\includegraphics[width=2.2in]{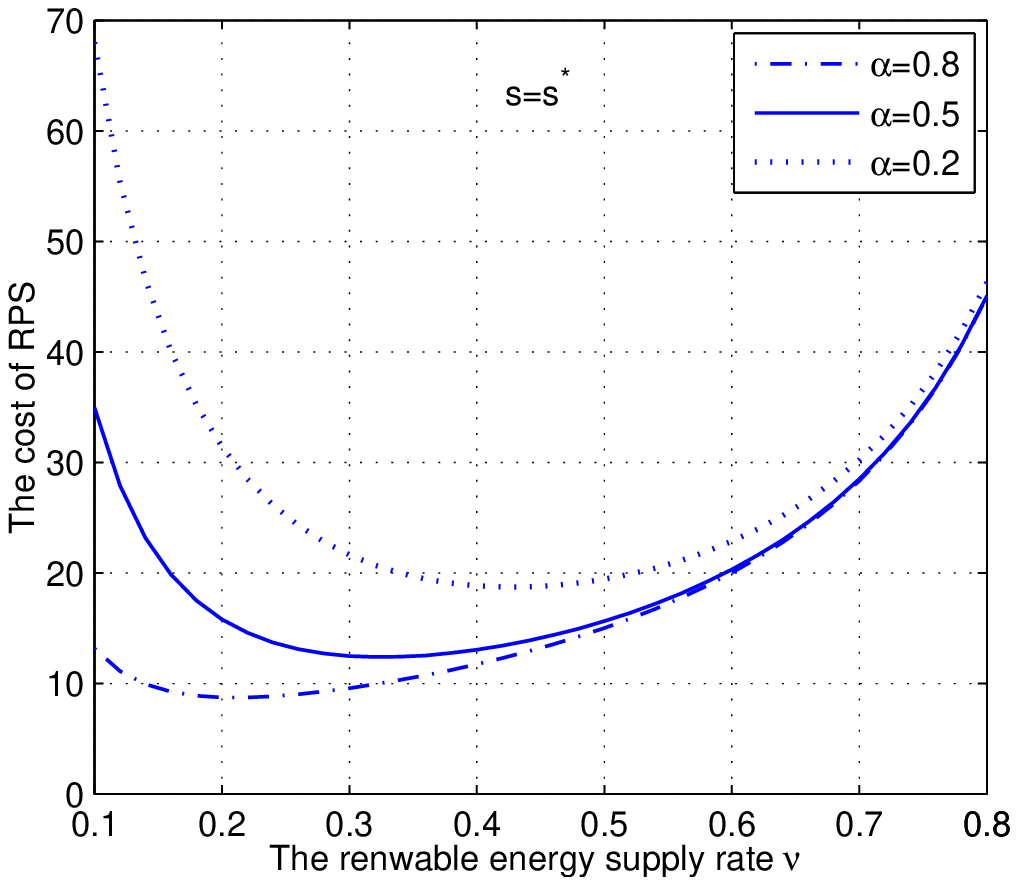}}\hspace{-0.1in}
\subfigure[The response of RPS with different load factor at $s=s^*$.]{ \label{fig:subfig:b}
\includegraphics[width=2.2in]{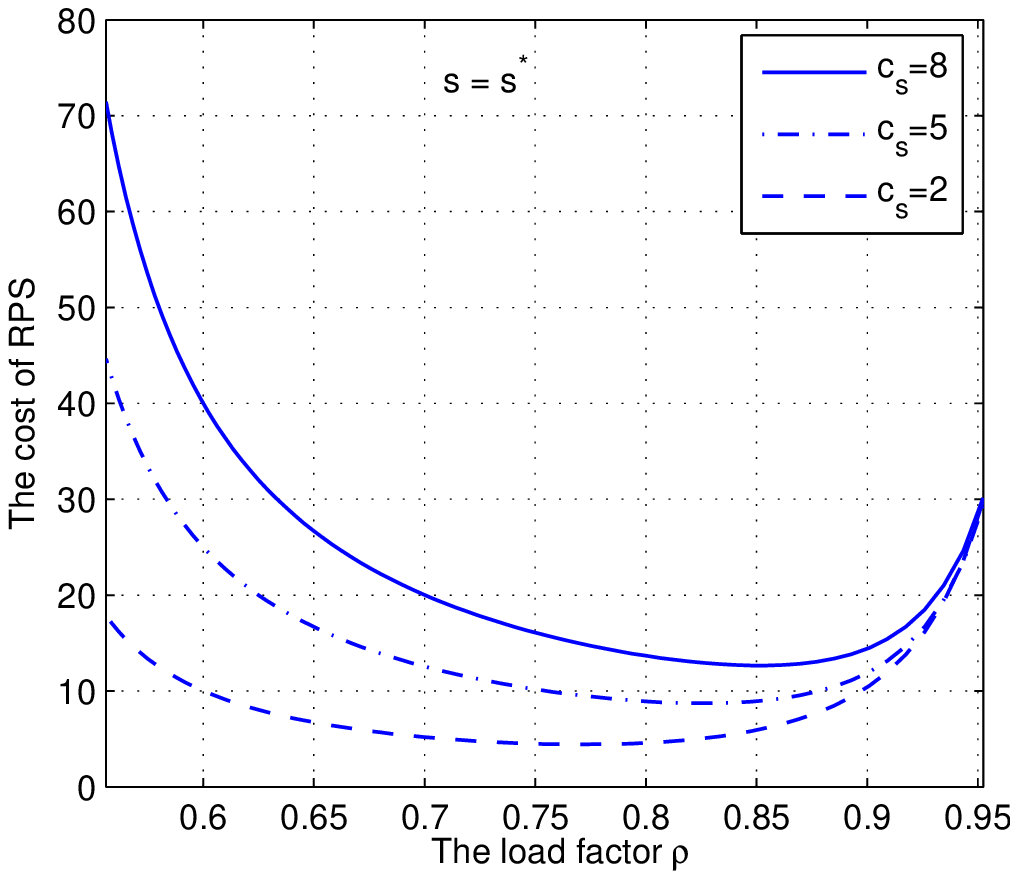}}\hspace{-0.1in}\vspace{-0.3cm}
\caption{Illustration of the NE.} \label{figNE}
\end{figure*}
\vspace{-0.35em}
\subsection{The heavy traffic approximation}
In Table \ref{tbl-len} and Fig. \ref{queuelength}, we illustrate that the continuous-state approximation can be justified by a heavy traffic approximation with an incorporation of general inter-arrival time and service time distributions. For instance, we assume that the traffic of the BS alternates between a high state and a low state, where the traffic is in ``high'' with the connections' inter-arrival time exponentially distributed with rate $3.5$ (e.g., connections/second) and in the ``low'' state with such a rate $2.3$. The traffic model can be viewed as a 2-type hyper-exponential distribution, i.e., there are two different mobile services with different request rates. We consider that the two type services arrive with an equal probability. Thus, the average connections' inter-arrival time is $\frac{1}{2}\frac{1}{2.3}+\frac{1}{2}\frac{1}{3.5}=0.36$.
We consider that the renewable energy supply rate follows a normal distribution with the mean $\mu$
\footnote{The normal distribution is often used to model random noise. Here, we use such a distribution for our simulations. But, our analysis also applies to general distributions under heavy traffic situations.}.
Table~\ref{tbl-len} shows the average queue length where $\rho=\frac{1}{0.36\mu}$.
And, the probability density function (PDF) of the queue length is plotted in
Fig.~\ref{queuelength}. We can observe that
the continuous-state approximation is quite accurate when the average queue length is equal or larger than~$2$.
\begin{table}[h]
\vspace*{-0.7em}
\caption{The average queue length under different traffic conditions.}\label{tbl-len}
\vspace*{-1.5em}
\begin{center}
\begin{tabular}{|l|l|l|l|l|}\hline
  ~&$\rho=0.39$ &$\rho=0.70$&$\rho=0.80$&$\rho=0.93$\\
  \hline
  Analysis & 1.0& 2.3& 4.3&14.6\\
  \hline
  Simulation & 0.7& 2.4& 4.2&14.2\\
  \hline
\end{tabular}
\end{center}
\end{table}
\vspace{-0.35em}
\subsection{Analysis for the single BS scenario}
Fig.~\ref{central} verifies the results pertaining to the centralized global minimum of the system cost where $\bar{\nu} = 0.33$, $\bar{s} = 7.29$, and $C(\bar{\nu},\bar{s}) = 17.19$ for
$b=10,~c_s=5,~\varphi=1$. Recall that $\nu$ is the normalized energy supply rate, i.e.,
$\nu=\frac{\mu-\lambda}{\lambda}$. Thus, the load factor is $\rho = \lambda/\mu = 0.75$ in the
global minimum state. The system consists of the energy reservation cost and the renewable energy
supply cost. Each of them depends on both the values of $s$ and $\nu$. We can observe that for a certain system cost (e.g., 17.96), $s\cdot\nu$ will be a fixed value. This means that increasing $s$ has the same effect as reducing $\nu$ to achieve such a  system cost. Thus, it is easy to imagine that there will be
a competition between adjusting $s$ and $\nu$ in the decentralized system.

The system cost and the Nash solution are only related to the ratio of the capacity of
the RPS to the connection demand rate, i.e., $\frac{\mu_0}{\lambda}$.
The output of the RPS depends on the weather condition, the size of power generator i.e., the solar panel, and the time on one day. For example, in \cite{Hoff2}, it is shown  that the 10-second irradiance is about $500$ W/m$^2$ at a measure point in Canada over a clear sky at 12:00, and this value reduces to
about $200$ W/m$^2$ at 10:00.  The efficiency of the solar cell is defined as the ratio of energy output from the solar cell to input energy from the sun. Generally, the efficiency of a solar cell could be $20\%$, and a recent report shows that the $44.7\%$ efficiency has been achieved with a special technology \cite{Solar}. With regards to the wind energy,  the work in \cite{Krieger} shows that
a small wind turbine located on the roof can generate a $150$ W output.

\begin{figure*} \centering
\subfigure[Comparison of the system cost.]{ \label{fig:subfig:a}
\includegraphics[width=2.2in]{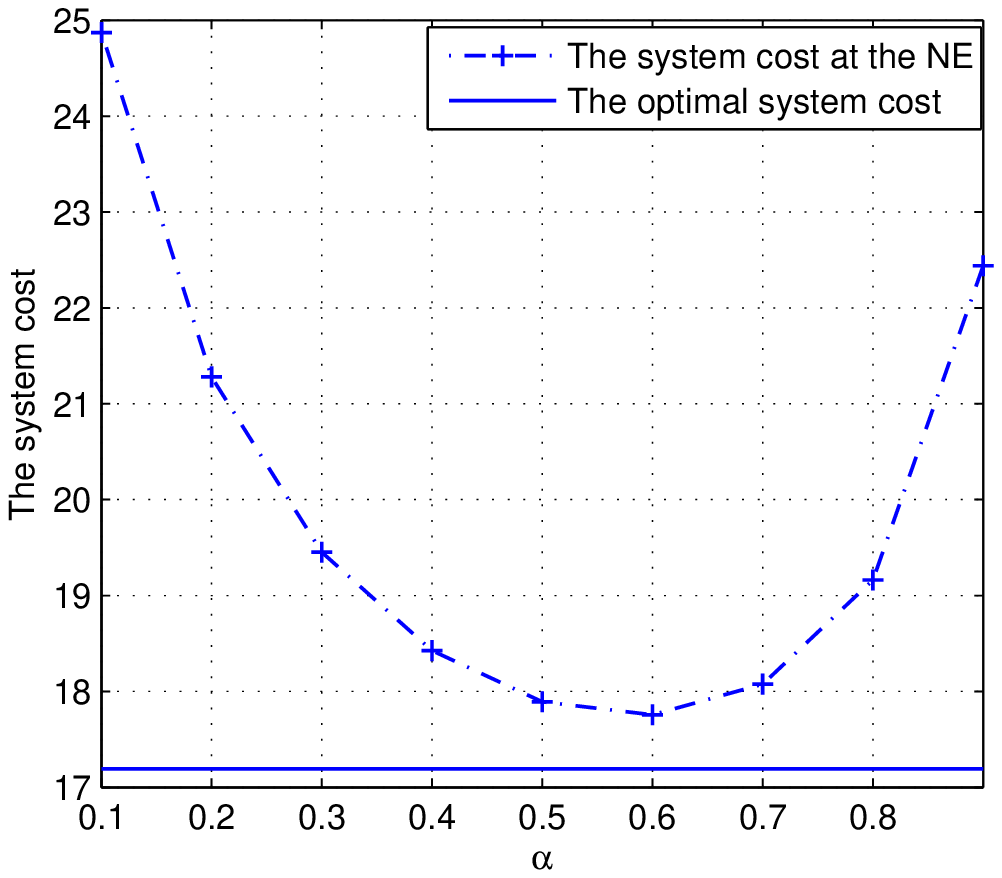}}\hspace{-0.1in}
\subfigure[The BS costs.]{ \label{fig:subfig:b}
\includegraphics[width=2.2in]{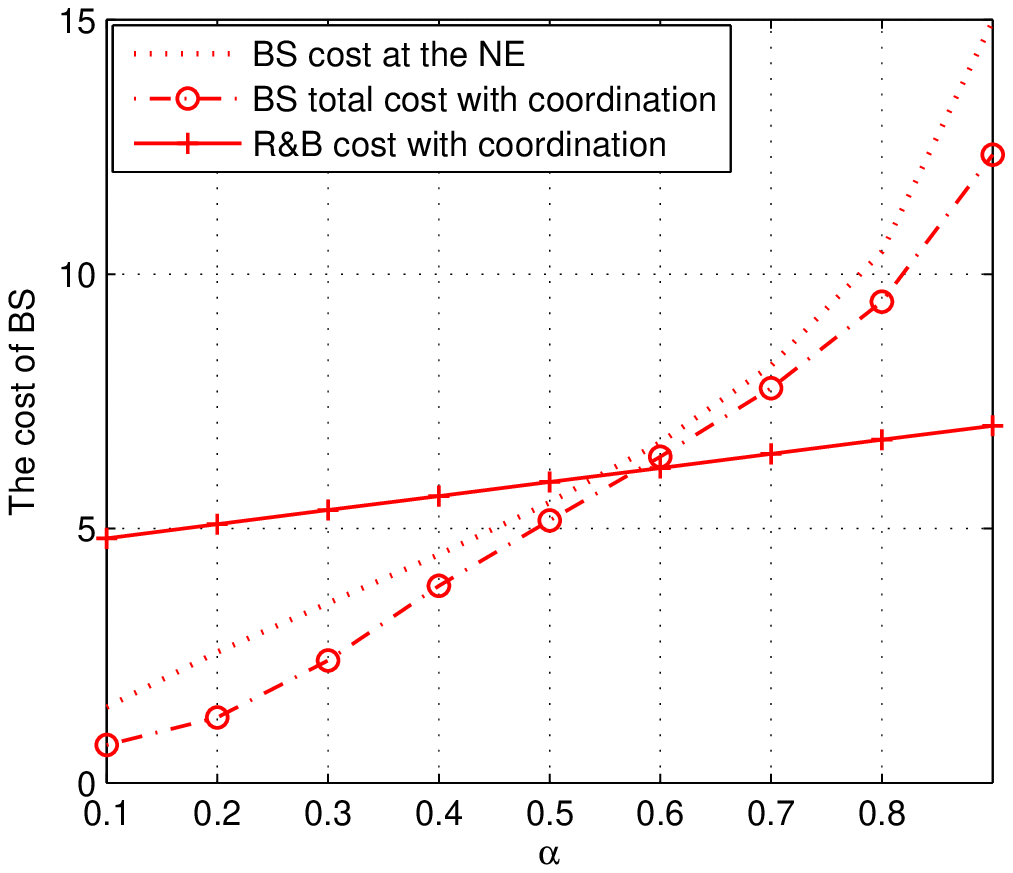}}\hspace{-0.1in}
\subfigure[The RPS costs.]{ \label{fig:subfig:b}
\includegraphics[width=2.2in]{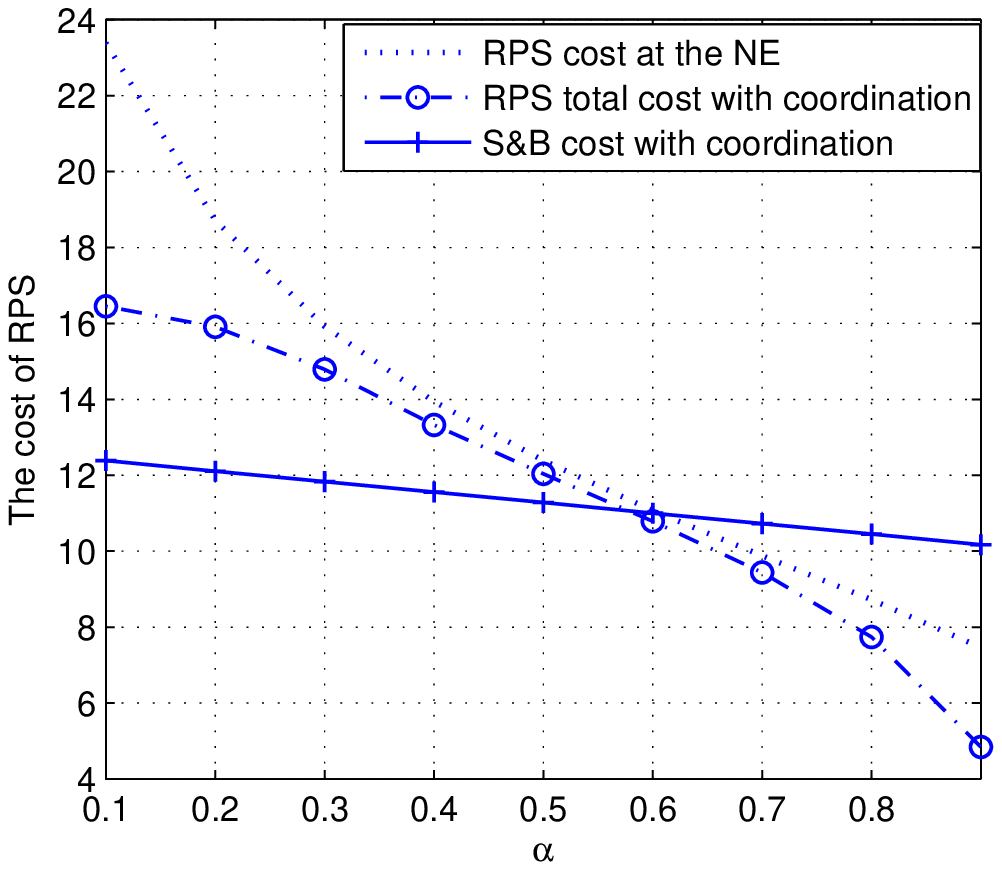}}\vspace{-0.3cm}
\caption{Equilibrium costs analysis in which R\&B represents reservation-and-backlogged, and S\&B represents supply-and-backlogged.} \label{figNEopt}
\end{figure*}

In Fig.~\ref{central2}, $\mu_0$ varies from $1.2\lambda$ to $2.2\lambda$.
Recall that the energy coefficient of each connection is $24$ W(Joule/S).
Consider $\lambda = 1.5$ connections/second as an example. Each connection accounts for one second, and consumes $240$ Joule. Then, $\mu_0 = 1.2\lambda$ means that the RPS has an average $\mu_0 = 4.32k$ Joule output in 10-second, which is $432$ W. As shown in Fig.~9, by increasing $\frac{\mu_0}{\lambda}$, the optimal system cost reduces and the optimal energy reservation level $s$ also can be reduced. In other words, if the RPS has an efficient production capacity, i.e., the solar panel experiences a clear sky, the BS can make a small reservation, and the supply cost of the RPS can be also reduced.

\begin{figure}[t]
\centering
\includegraphics[width=65mm]{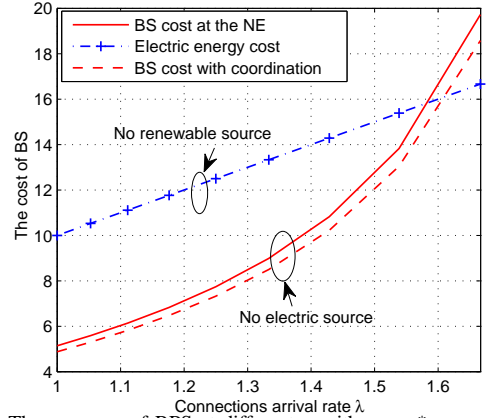}\vspace{-0.5cm}
\caption{The response of RPS at different $\rho$ with $s=s^*$.}\label{com-electric}
\end{figure}

\begin{figure}[t]\centering
\subfigure[The cost of the BS with adaptive power management.]{ \label{fig:subfig:a}
\includegraphics[width=65mm]{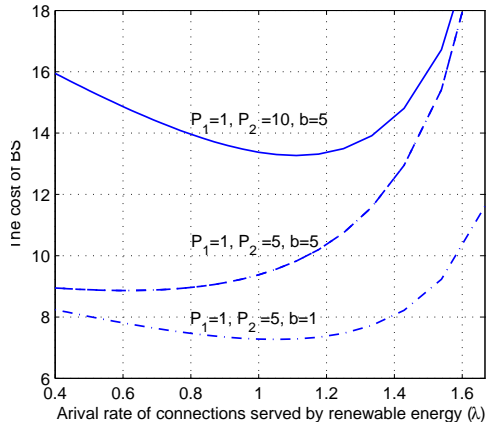}}\hspace{-0.1in}
\subfigure[The energy cost gain with the combination energy.]{ \label{fig:subfig:b}
\includegraphics[width=65mm]{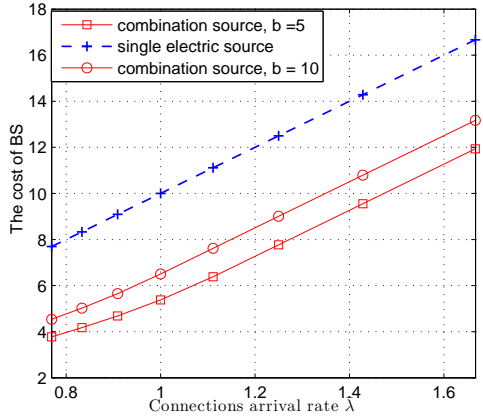}}\hspace{-0.1in}\vspace{-0.3cm}
\caption{System performance at the NE, and the energy management based on the NE.} \label{combiantion}
\vspace{-0.35em}
\end{figure}

Next, we study the decentralized situations. Set $b=10,~c_s=5,~\varphi=1$. In Fig. \ref{figNE}, we show the best responses of the BS and the RPS at the NE. Neither the BS nor the RPS can reduce their individual cost via a unilateral deviation from these strategies at the NE $(\nu^*,s^*)$. The value of the splitting factor $\alpha$ is exogenously determined. As $\alpha$ increases, the BS incurs more backlogged costs. We can observe from Fig. \ref{figNE}(a) that,  a larger $s^*$ corresponds to a larger $\alpha$. This means that the BS should reserve more energy to make up
for the backlogged cost with a larger $\alpha$. For the RPS, a larger $\alpha$ incurs less splitting backlogged cost. Thus, as shown in Fig. \ref{figNE}(b), the RPS will set a smaller energy supply rate with a larger $\alpha$. Moreover, the costs of RPS in terms of load factor $\rho$ are illustrated in Fig. \ref{figNE}(c). At $s=s^*$, $\nu=0.05\rightarrow0.8$ corresponds to
$\rho=0.56\rightarrow0.95$. The figure shows the best response of the RPS occurs at
heavy traffic situations, e.g., $\rho>0.7$. These results corroborate the intuition that a larger supply rate will result in a larger supply cost for the RPS.

In Fig.\ref{figNEopt}(a), we compare the system cost at the NE and the optimal system cost ($b=10,~c_s=5,~\varphi=1$). In this figure, we can see that the gap between such two values is quite small with the backlogged cost factor $\alpha = 0.5$. In order to coordinate the system to achieve a minimum system cost, based on the cost sharing, a kind of transfer payment contract is proposed. In Fig.\ref{figNEopt}(b) and Fig.\ref{figNEopt}(c), we show the BS cost and the RPS cost at
the NE. For illustration purpose, we set the cost sharing factor, $\varepsilon$, to be a medium value of the range. The corresponding costs with such a contract coordination are illustrated. Also, we show the
reservation and backlogged costs of the BS, and the supply and backlogged costs of the RPS in the centralized optimal solution. Moreover, the gap between the RB cost in the centralized optimal solution and the total cost of the BS with coordination is the transfer payment in the contract.

Fig. \ref{com-electric} shows the energy cost saving by using the renewable energy.
We set $\mu_0 = 2,~\alpha=0.5$. Clearly, using the renewable energy not only achieves lower cost in most cases, but also allows to avoid electric energy consumption.
In particular, if the connections arrival rate
of the BS is $1.3$ connections per second, then the saving of the total electric energy which is mostly
generated from the fossil fuel will be  $1.3\cdot24\cdot60 \cdot 1000= 0.52$ kWh in 1000 minutes.
However, if the connections' arrival rate is larger than $1.6$, i.e., the
load factor is $0.9$, then, using the renewable energy is not efficient, since
the QoS reducing cost overcomes the cost benefit of the green energy.

\begin{figure*}[t] \centering
\subfigure[The energy supply rate demand and allocation.]{\label{fig:subfig:a}
\includegraphics[width=2.2in]{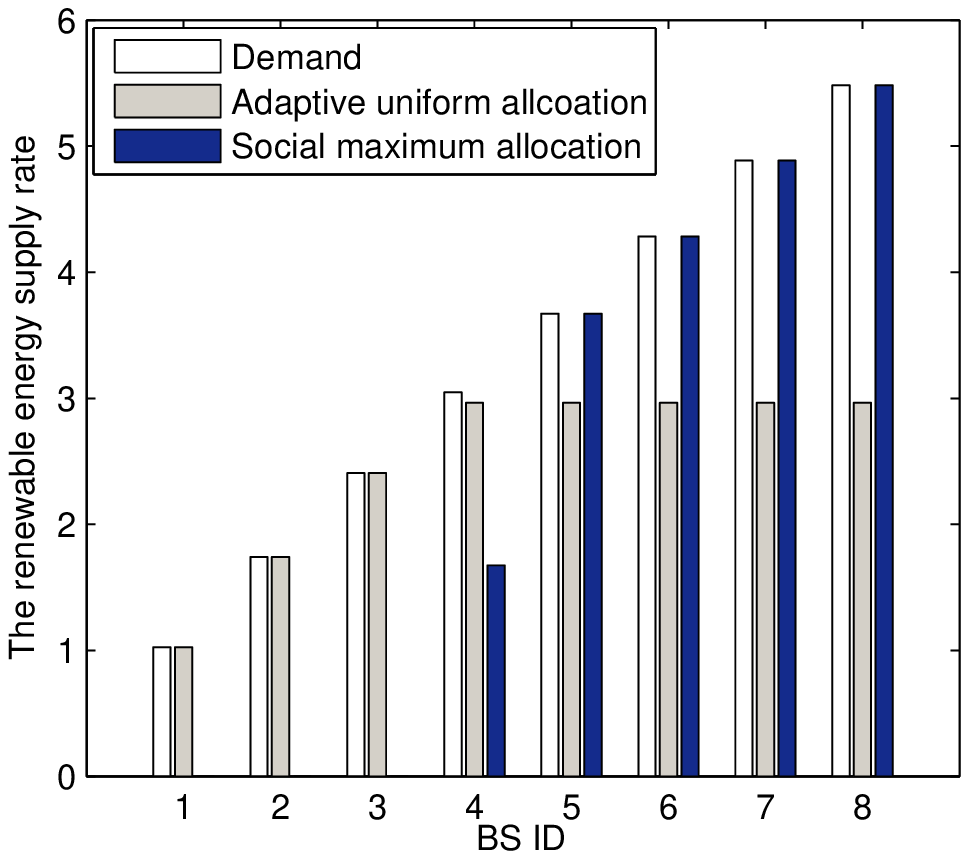}}\hspace{-0.1in}
\subfigure[Illustration of the dominant equilibrium.]{ \label{fig:subfig:b}
\includegraphics[width=2.2in]{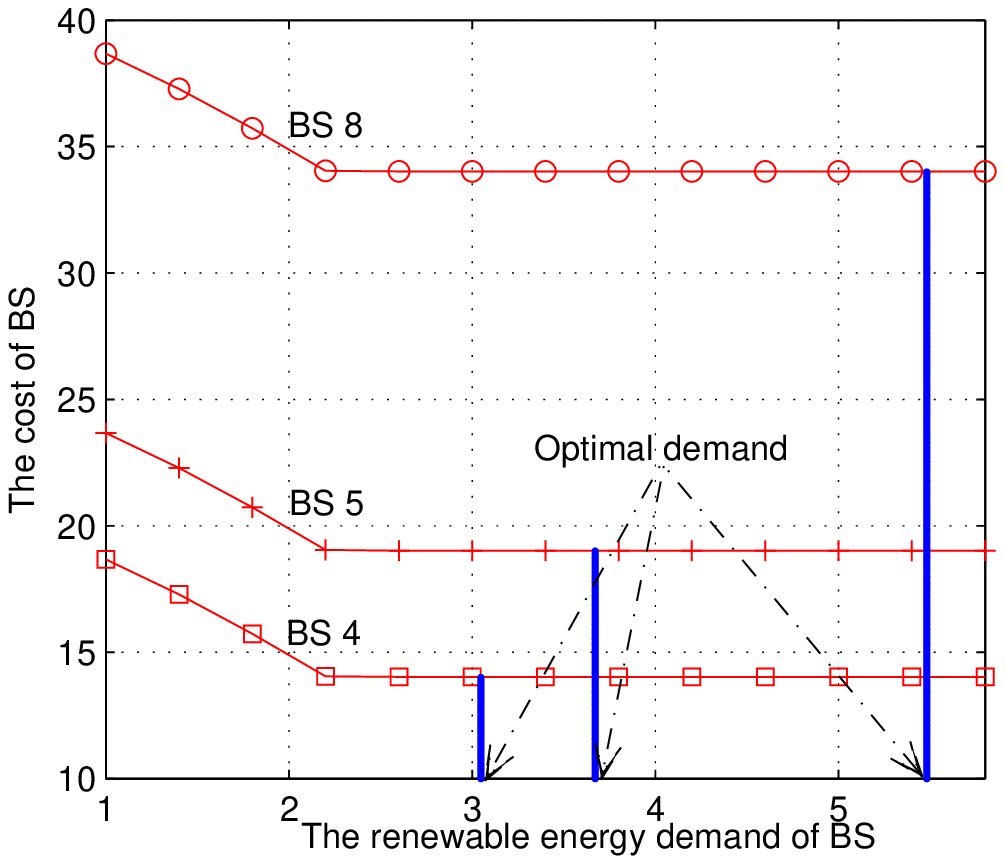}}\hspace{-0.1in}
\label{fig:subfig:b}
\subfigure[Comparison of the Pareto mechanism and the IC mechanism.]{
\includegraphics[width=2.2in]{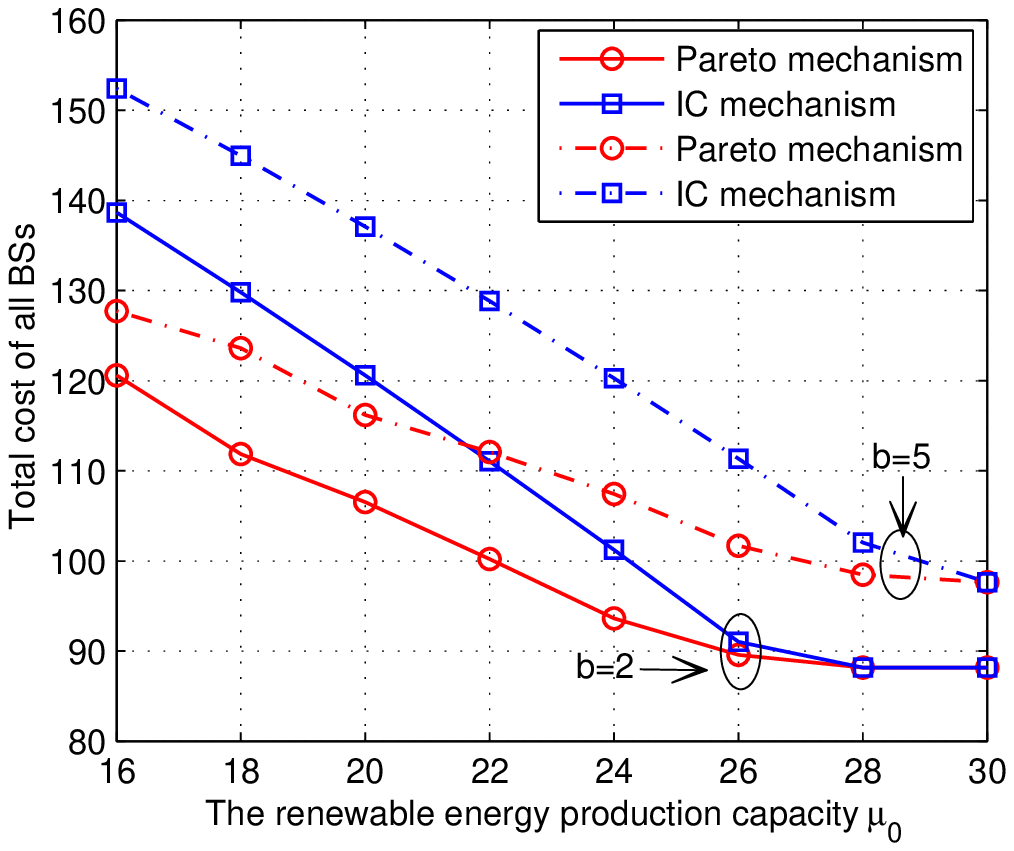}}\hspace{-0.1in}\vspace{-0.3cm}
\caption{Analysis for the energy supply allocation.} \label{allocation}
\end{figure*}
Next, the situations in which the BS powered by the combination electric and renewable sources are investigated. Different countries or provinces may have different energy prices.
In Fig. \ref{combiantion}(a),  we set $\bar{\lambda} = 1.8$, $P_1=1$, $\mu_0=2$ and $\alpha=0.5$.
As the electric grid energy price $P_2$ increases, the BS will allocate a bigger fraction of its connections to be served by the renewable energy source.
For instance, with the backlogged cost $b=5$, if $P_2=5\rightarrow P_2=10$,
to achieve the minimum cost, the BS will set $\lambda= 0.67\rightarrow\lambda=1.11$.
However, if the BS has a low QoS cost, i.e, $b= 1, P_2=5$, then $\lambda=1.05$.
This means that the BS can obtain more green energy gain with low backlogged costs.
In Fig. \ref{combiantion}(b), we compare the cost of the BS powered by the electric grid, and
the cost of the BS served by a combination energy.
We observe that,  even as connections arrival rate $\bar{\lambda}$ increases, using the green energy still benefits the BS. This is because the adaptive power management allocates a fraction of connections to be served by the electric grid, not yielding a high QoS cost.

Consider the case in which $\bar{\lambda} = 1.25$ as a example. The cost gain is about $4.5$ with $b=5$ in this case. As we stated above, this cost accounts for 1000 minutes in term of cents.
Consequently,  the cost saving for one BS is about 1.94 dollars per month.
Even though such a value is seemingly small, this cost saving can be significant when a large number of
BSs are deployed. Also, the lower use of power from the electric grid will reduce the overall $\text{CO}_2$ footprint of wireless networks.

\subsection{Analysis for the multiple BSs scenario}
In Fig. \ref{allocation}(a), we examine the renewable energy supply rate allocation with $N=8$ BSs. We set $\mu_0=20$ which represents that the maximum average energy output rate of the RPS is $20$. Let $p=2$, $P_1 = 1$, $b_i=2,i\in\mathcal{N}$ and $P_2 = 10$.
$\bar{\lambda_i}=0.5i$ where $i\in[1,8]$ is the index of BS $i$.
Fig.~\ref{allocation}(a) shows the BSs' ordered energy supply rate and the allocated rate by using the Pareto mechanism and the Proposed adaptive uniform allocation.
The figure illustrates that the Pareto mechanism favors the BS with a large amount
demand. However, the uniform allocation favors the BS with a small order, and the
BSs 5, 6, 7 and 8 get a uniform allocation.

Corresponding to the scenario in Fig. \ref{allocation}(a), Fig. \ref{allocation}(b) shows the cost of the BS when the reported demand varies. Consider BSs $4$, $5$ and $8$ as examples.
We observe that as the reported demand reduces, the cost of the BS
cannot reduces and may increases. And, when the reported demand is larger than the optimal demand, the cost of a BS will maintain a certain value. Thus, it is a dominant equilibrium for all BSs to truthfully report their optimal energy demands under the proposed adaptive uniform allocation mechanism. Moreover, note that the mechanism requires at most 1 adjustment.

In Fig. \ref{allocation}(c), we show the total BSs'cost under the Pareto mechanism and the IC mechanism.
It is clear to see that the adaptive uniformly allocation is not efficient.
However, the gap between the two mechanisms in terms of total BSs'cost becomes smaller as the energy production capacity of the RPS, $\mu_0$, increases. For instance, when $b=5$, $\mu_0=28$, the IC mechanism achieves almost the same system cost as the Pareto mechanism, since the total demand of BSs is about $28$ as well, and both the two mechanisms will allocate all of the energy capacity to BSs.
Moreover, by choosing the IC mechanism, the RPS can acquire truthful energy demand information from the BSs. In practice, the RPS can use the informative IC mechanism in some stages, yielding
the RPS's secure decision-makings on capacity planning and sales planning. Also, it can switch to
the Pareto mechanism to make a larger social welfare, and higher revenue.

\section{Conclusions}\vspace{-0.35em}
In this paper, we have studied a wireless network in which the BS  is able to acquire power from renewable energy sources. In the studied model, the BS can make
reservations for the renewable energy in order to support continuous wireless
connections.  We have formulated the problem as a noncooperative game between the BS and the RPS.
In this game, an M/M/1 make-to-stock queuing model has been used to analyze the
the competition between energy reservation strategy and the energy supply rate setting.
Several approaches for improving the efficiency of the Nash equilibrium as well as for better controlling the purchase of renewable energy have been proposed.
Then, we have extended the model to the case in which
multiple BSs operate with a monopolistic RPS. For this case,
we have proposed an allocation mechanism in which
BSs have an incentive to truthfully report their optimal demands.

Simulation results have shown that
using green energy powered BS yields a significant electric energy saving.
Furthermore, our results also have revealed that a monopolistic RPS can use the proposed informative IC mechanism to acquire the market information in the multi-BS market, and can also switch to the Pareto mechanism to achieve the optimal social welfare.
For the future work, how to
develop mathematical techniques to examine the impacts of dynamic prices on both the green energy allocation and the Qos cost of wireless networks will be very interesting and challenging.

\section*{ACKNOWLEDGEMENTS}
This work was supported by  the China National Science Foundation under Grants (61201162, 61271335, 61302100, 61471203); the U.S. National Science Foundation under Grants CNS-1460333, CNS-1406968, and AST-1506297; The National Fundamental Research Grant of China (2011CB302903); National Key Project of China (2010ZX03003-003-02); Basic Research Program of Jiangsu Province (NSF), China (BK2012434); Postdoctoral Research Fund of China (2012M511790); New Teacher Fund for Doctor Station of the Ministry of Education, China (20123223120001,20133223120002); and ``Towards Energy-Efficient Hyper-Dense Wireless Networks with Trillions of Devices'', a Commissioned Research of National Institute of Information and Communications Technology (NICT), Japan.

\bibliographystyle{IEEEtran}

\begin{thebibliography}{99}
\bibitem{Yu}F.R. Yu, X. Zhang, and V. C. M. Leung, ``Green Communications and
Networking,'' \emph{CRC Press}, 2012.
\bibitem{Krishnamachari}E. Oh, B. Krishnamachari, X. Liu, and Z. Niu, ``Toward Dynamic Energyefficient
Operation of Cellular Network Infrastructure,'' \emph{IEEE Commun. Mag.}, vol. 49, pp. 56-61, Jun. 2011.
\bibitem{Marsan}M. A. Marsan, L. Chiaraviglio, D. Ciullo, and M. Meo, ``Optimal
Energy Savings in Cellular Access Networks,'' in \emph{Proc. of IEEE ICC Workshops}, Dresden, Germany, 2009.

\bibitem{Koutitas1}S. Kokkinogenis, and G. Koutitas,
``Dynamic and Static Base Station Management Schemes for Cellular Networks'', \emph{Proc. IEEE GlobeCom 2012}, Anaheim, California, US, 2012.

\bibitem{Gozalvez}J. Gozalvez, ``Green Radio Technologies [Mobile Radio],''
in \emph{IEEE Vehic. Tech. Mag.}, vol. 5, no. 1, pp.9-14, Mar. 2010.

\bibitem{Belfqih}M. Belfqih, J. Gao, D. Xu, et al., ``Joint Study on Renewable Energy
Application in Base Transceiver Stations,'' in \emph{ Proc. Int'l. Telecommun. Energy Conf. (INTELEC)},
Oct. 2009.

\bibitem{Huawei}Huawei, ``Renewable energy,'' [Online]. http://www.
greenhuawei.com/green/greenenergy.html.

\bibitem{Koutitas2}G. Koutitas and L. Tassiulas, ``Smart Grid Technologies For Future Radio and Data Center Networks'', \emph{IEEE Communications Magazine},  Vol. 52, no. 4, pp.120-128, Mar. 2014.

\bibitem{Ansari}T. Han and N. Ansari, ``Powering Mobile Networks with Green Energy,'' \emph{IEEE Wireless Communications Magazine}, Vol. 21, No. 1, pp. 90-96, Feb. 2014.

\bibitem{FZhang}F. Zhang and S. T. Chanson, ``Improving Communication
Energy Efficiency in Wireless Networks Powered by
Renewable Energy Sources,'' \emph{IEEE Trans. Vehic. Tech.},
vol. 54, no. 6, pp.2125-2136, Nov. 2005.

\bibitem{Ho}C.K. Ho and R. Zhang, ``Optimal Energy Allocation for
Wireless Communications with Energy Harvesting Constraints,''
\emph{IEEE Trans. Signal Processing}, vol. 60, no. 9,
 pp. 4808-18, 2012.

\bibitem{Han1}T. Han and N. Ansari, ``Energy Agile Packet Scheduling
to Leverage Green Energy for Next Generation Cellular
Networks,'' \emph{Proc. IEEE ICC}, Budapest, Hungary, Jun. 2013.

\bibitem{Han2}T. Han and N. Ansari, ``ICE: Intelligent Cell Breathing
to Optimize the Utilization of Green Energy,'' \emph{IEEE Commun.
Letters}, vol. 16, no. 6,  pp.866-69, Jun. 2012.

\bibitem{HKim}H. Kim, G. de Veciana, X. Yang, and M. Venkatachalam, ``Distributed Optimal User Association and Cell Load Balancing in Wireless Networks,''
\emph{IEEE/ACM Trans. Networking}, vol. 20, no. 1, pp.177-90, Feb. 2012.


\bibitem{Farbod}A. Farbod and T.D. Todd, ``Resource Allocation and
Outage Control for Solar-Powered WLAN Mesh Networks,''
\emph{IEEE Trans. Mobile Computing}, vol. 6, no. 8, pp. 960-70,
Aug. 2007.

\bibitem{Niyato}D. Niyato, X. Lu, and P. Wang, ``Adaptive Power Management for Wireless Base Stations in a Smart Grid Environment,'' \emph{IEEE Wireless Comm.}, Dec. 2012, pp.44-51.


\bibitem{Hoff}T.E. Hoff and P. Perez, ``Modeling PV Fleet Output Variability,''
\emph{Solar Energy}, vol. 86, no. 8, Aug. 2012, pp. 2177-2189.



\bibitem{Arnold}O. Arnold et al., ``Power Consumption Modeling of
Different Base Station Types in Heterogeneous Cellular
Networks,'' \emph{Future Network and MobileSummit 2010
Conf.}, June 2010.

\bibitem{Deruyck}M. Deruyck, W. Vereecken, W. Joseph, B. Lannoo, M. Pickavet and L. Martens,``
Reducing the power consumption in wireless access networks: overview and recommendations'',
Progress In Electromagnetics Research, vol.~132, pp.255-274, 2012.


\bibitem{Epstein}B. M. Epstein and M. Schwartz,`` Predictive QoS-Based Admission Control for
Multiclass Traffic in Cellular Wireless Networks,'' \emph{IEEE J. on Sel. Com.},
vol.18, no.3, pp.523-534, Mar. 2000.

\bibitem{Celik}G.D. Celik, L.B. Le, and E. Modiano, ``Dynamic Server Allocation Over Time-Varying
Channels With Switchover Delay,'' \emph{IEEE Trans. on IT}, vol.58, no.9, pp.5856-5877, Sept. 2012.

\bibitem{Huang}J. Huang, H. Wang, Y. Qian, and C. Wang, ``Priority-Based Traffic Scheduling and
Utility Optimization for Cognitive Radio Communication Infrastructure-Based Smart Grid,''
\emph{IEEE Trans. on Smart Grid,} vol.4, no.1, pp.78-86, Mar. 2013.

\bibitem{Niu1}D. Niu, Z. Liu, B. Li, and S. Zhao, ``Demand Forecast and Performance
Prediction in Peer-Assisted On-Demand Streaming Systems,'' in \emph{Proc. of
IEEE INFOCOM Mini-Conference}, Shanghai, China, Apir. 2011.

\bibitem{Halfin}S. Halfin and W. Whitt,
``Heavy Traffic Limits for Queues with Many Exponential Servers'', \emph{Operations Research},
vol.29, no.3, pp.567-588, Jun. 1998.

\bibitem{Tham}C. Tham and T. Lou, ``Sensing-Driven Energy Purchasing in Smart Grid Cyber-Physical System,'' \emph{IEEE Trans. On Sys. and Cyber. : Systems}, vol.~41, no.~4, pp.773-784, Jul. 2013.

\bibitem{Fang}X. Fang, S. Misra  G. Xue, and D. Yang
``Smart Grid ¨C The New and Improved Power Grid: A Survey'', \emph{IEEE Communications Surveys and Tutorials}, vol.14, no.4, pp.944-980, Fourth Quarter 2012.

\bibitem{buy-back}[Online]. http://www.shoptxelectricity.com/partners/green-mountain
-energy/buy-back-programs.aspx.

\bibitem{Krieger}E.M. Krieger, ``Effects of Variability and Rate on
Battery Charge Storage and Lifespan'', Ph.D. Dissertation, Princeton University, 2013.


\bibitem{queue}L. Kleinrock, Queueing Systems Volume 1: Theory, \emph{Wiley-Interscience Publication},
1975.

\bibitem{Mishra}A.R. Mishra, Fundamentals of Cellular Network Planning and Optimisation, John Wiley \& Sons Press, Dec. 2006.


\bibitem{lte}[Online]. http://web.cecs.pdx.edu/~fli/class/LTE\_Reource\_Guide.pdf.

\bibitem{Osborne}M.J. Osborne and A. Rubinstein,  A course in game theory, (MIT Press, 1994).

\bibitem{Saad} Z. Han, D. Niyato, W. Saad, T. Ba\c{s}ar, and A. Hj{\o}rungnes, ``Game Theory in Wireless and Communication Networks: Theory, Models, and Applications,'' Cambridge University Press, 2011.

\bibitem{Yildiz}M. Yildiz, Lecture Notes on Supermodular Games, Massachusetts Institute of Technology, 2010.

\bibitem{Topki}D.M. Topki, ``Equilibrium Points in Non Zero-sum N-person Submodular Games,''
\emph{Siam J. Control and Optimization}, vol.17, no. 6, pp.773-787, Nov. 1979.

\bibitem{Gprice}Green Power Markets, [Online]. http://apps3.eere.energy.gov/
greenpower/markets/pricing.shtml?page=1.

\bibitem{Guide}Guide to purchasing green power (Chapter 4), http://www.epa.gov/green 
power/documents/purchasing\_guide\_for\_web.pdf.

\bibitem{Cachon}P.G. Cachon and M. A. Lariviere, ``Capacity Choice and Allocation:
Strategic Behavior and Supply Chain Performance,''
\emph{Management Science}, vol. 45, no. 8, pp. 1091-1108, Aug. 1999.

\bibitem{Eprice}Average electricity prices around the world, http://shrinkthat footprint.com/ average-electricity-prices-kwh.

\bibitem{Hoff2}T.E. Hoff and R. Perez, ``PV Power Output Variability: Correlation Coefficients,''
\emph{Research Report of Clean Power Research}, Nov. 11, 2010.

\bibitem{Solar}[Online]. http://en.wikipedia.org/wiki/Solar\_cell\_efficiency.
\end{thebibliography}

\begin{IEEEbiography}[{\includegraphics[width=1in,height=1.25in,clip,keepaspectratio]{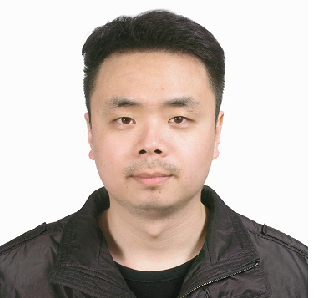}}]{Dapeng Li}(M'11) received the B.S. degree in Electronics from Harbin Engineering
University, China, in 2003, and  the M.S. degree in Communication
Systems from Harbin Engineering University, China, in 2006. He
received his Ph.D. degree from the Department of Electronic
Engineering, Shanghai Jiao Tong University, China, in 2010.
From Qct. 2007 to Aug. 2008, he have joined in the 4G/LTE development
at the R\&D Center, Huawei Company, Shanghai, China.
Starting Jan. 2011, he is a faculty member in the College of
Telecommunications and Information Engineering, Nanjing University
of Posts and Telecommunications, China. His research
interests include future cellular networks, machine to machine communications,   mobile ad hoc networks,
cognitive radio networks, radio resource management and cooperative
communications.
\end{IEEEbiography}
\begin{IEEEbiography}[{\includegraphics[width=1.2in,height=1.25in,clip,keepaspectratio]{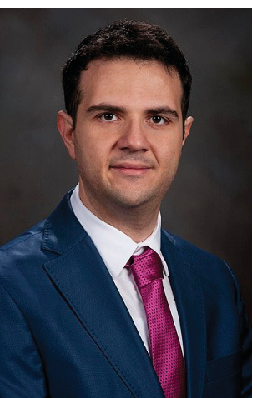}}]{Walid Saad}(S'07, M'10) received his B.E. degree in Computer and Communications Engineering from the Lebanese  University, in 2004, his M.E. in Computer and Communications Engineering from the American  University of Beirut (AUB) in 2007, and his Ph.D degree from the University of Oslo in 2010.

Currently,  he is an Assistant Professor at the Bradley Department of Electrical and Computer Engineering at Virginia Tech, where he leads the Network Science, Wireless, and Security (NetSciWiS) laboratory, within the Wireless@VT research group. Prior to joining VT, he was a faculty at the Electrical and Computer Engineering Department at the University of Miami and he has held several research positions at  institutions such as Princeton University and the University of Illinois at Urbana-Champaign.

His  research interests include wireless and social networks, game theory, cybersecurity, smart grid, network science,  cognitive radio, and self-organizing networks.  He has published one textbook and over 115 papers in these areas. Dr. Saad is the recipient of the NSF CAREER award in 2013 and of the AFOSR summer faculty fellowship in 2014. He was the author/co-author of three conference best paper awards at WiOpt in 2009, ICIMP in 2010, and IEEE WCNC in 2012. He currently serves as an editor for the IEEE Transactions on Communications and the IEEE Communication Tutorials \& Surveys.
\end{IEEEbiography}

\begin{IEEEbiography}[{\includegraphics[width=1in,height=1.25in,clip,keepaspectratio]{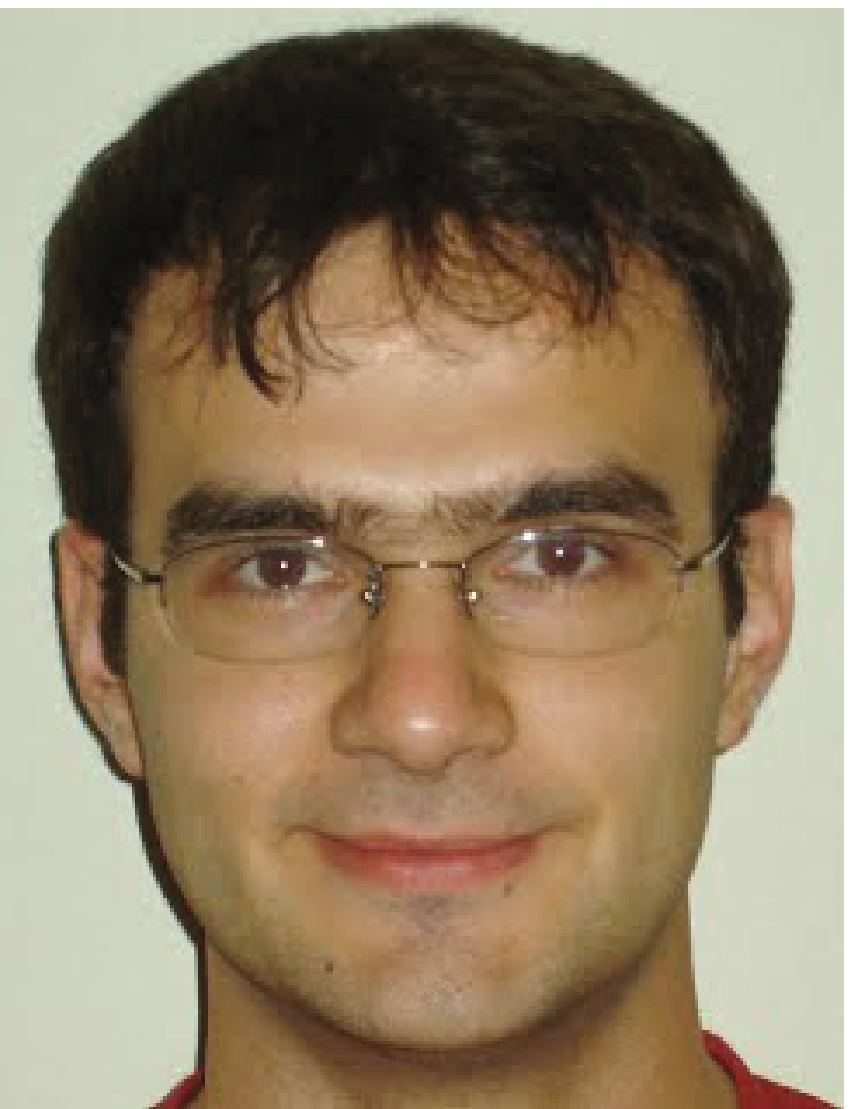}}]{Ismail Guvenc}
(senior member, IEEE) received his Ph.D. degree in electrical engineering from University of South Florida in 2006, with an outstanding dissertation award. He was with Mitsubishi Electric Research Labs during 2005, and with DOCOMO Innovations Inc. between 2006-2012, working as a research engineer. Since August 2012, he has been an assistant professor with Florida International University.
His recent research interests include heterogeneous wireless networks and future radio access beyond 4G wireless systems. He has published more than 100 conference/journal papers and book chapters, and several standardization contributions. He co-authored/co-edited three books for Cambridge University Press, is an editor for IEEE Communications Letters and IEEE Wireless Communications Letters, and was a guest editor for four special issue journals/magazines on heterogeneous networks. Dr. Guvenc is an inventor/coinventor in 23 U.S. patents, and has another 4 pending U.S. patent applications. He is a recipient of the 2014 Ralph E. Powe Junior Faculty Enhancement Award and 2015 NSF CAREER Award.
\end{IEEEbiography}

\begin{IEEEbiography}[{\includegraphics[width=1in,height=1.25in,clip,keepaspectratio]{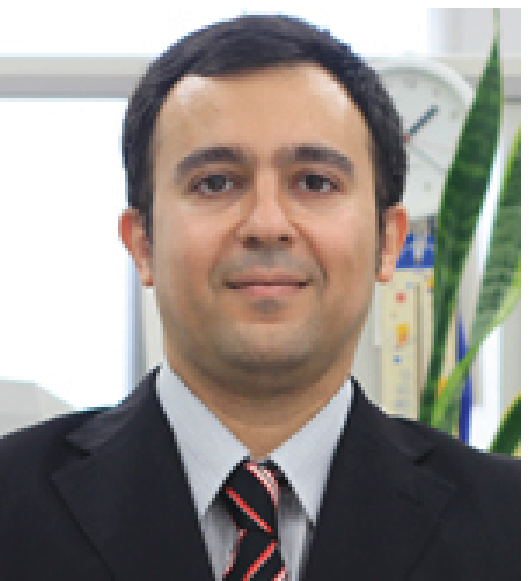}}]{Abolfazl Mehbodniya} received his Ph.D degree from the National Institute of Scientific Research-Energy, Materials, and Telecommunications (INRS-EMT), University of Quebec, Montreal, QC, Canada in 2010. From 2010 to 2012 he was a JSPS postdoctoral fellow at Tohoku University. Since Jan 2013, he has been an assistant professor at department of communications engineering, Tohoku University. Dr. Mehbodniya has 10+ years of experience in electrical engineering, wireless communications, and project management. He has over 40 published conference and journal papers in the areas of radio resource management, sparse channel estimation, interference mitigation, short-range communications, 4G/5G design, OFDM, heterogeneous networks, artificial neural networks (ANNs) and fuzzy logic techniques with applications to algorithm and protocol design in wireless communications. He is the recipient of JSPS fellowship for foreign researchers, JSPS young faculty startup grant and KDDI foundation grant.
\end{IEEEbiography}

\begin{IEEEbiography}[{\includegraphics[width=1in,height=1.25in,clip,keepaspectratio]{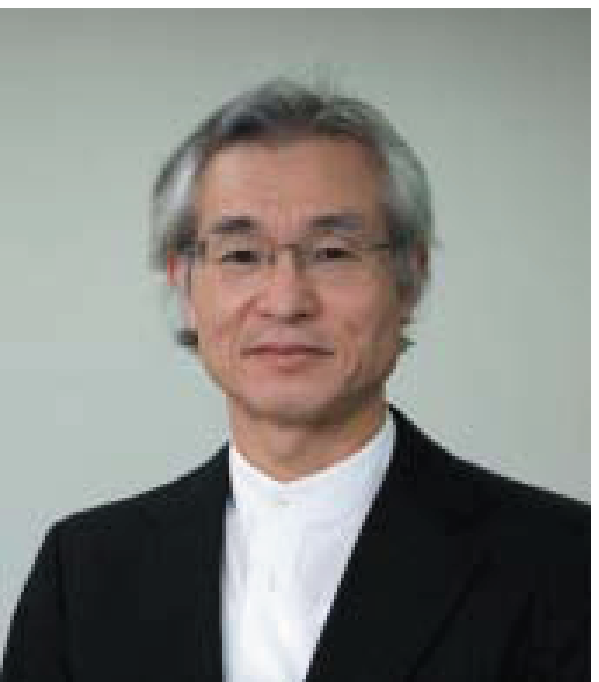}}]{Fumiyuki Adachi}
received the B.S. and Dr. Eng. degrees in electrical engineering from Tohoku University, Sendai, Japan, in 1973 and 1984, respectively. In April 1973, he joined the Electrical Communications Laboratories of NTT and conducted various types of research related to digital cellular mobile communications. From July 1992 to December 1999, he was with NTT DoCoMo, where he led a research group on Wideband CDMA for 3G systems. Since January 2000, he has been with Tohoku University, Sendai, Japan, where he is a Distinguished Professor of Communications Engineering at the Graduate School of Engineering. Professor Adachi is an IEEE Fellow and an IEICE Fellow. He is a pioneer in wireless communications since 1973 and has largely contributed to the design of wireless networks from 1 generation (1G) to 4G. He is listed on ISIHighlyCited.com and is an IEEE Vehicular Technology Society Distinguished Lecturer since 2012. He is a vice president of IEICE Japan since 2013. He was a recipient of the IEEE Vehicular Technology Society Avant Garde Award 2000, IEICE Achievement Award 2002, Thomson Scientific Research Front Award 2004, Ericsson Telecommunications Award 2008, Telecom System Technology Award 2010, Prime Minister Invention Award 2010, and KDDI Foundation Excellent Research Award 2012. His research interests include wireless signal processing for wireless access, equalization, transmit/receive antenna diversity, MIMO, adaptive transmission, channel coding, and wireless systems.
\end{IEEEbiography}
\end{document}